\newtheorem{mytheorem}{Theorem}
\newacronym{GBC}{GBC}{Gaussian broadcast channel}
\newacronym{DPC}{DPC}{dirty paper coding}
\newacronym{ZF}{ZF}{zero forcing}
\newacronym{MIMO}{MIMO}{multiple-input multiple-output}
\newacronym{SINR}{SINR}{signal-to-interference-plus-noise ratio}
\newacronym{SNR}{SNR}{signal-to-noise ratio}
\newacronym{AWGN}{AWGN}{additive white Gaussian noise}
\newacronym{MMSE}{MMSE}{minimum mean square error}
\newacronym{MAC}{MAC}{multiple access channel}
\newacronym{KKT}{KKT}{Karush-Kuhn-Tucker}
\newacronym{SVD}{SVD}{singular value decomposition}
\newacronym{PDG}{PDG}{primal-dual gradient}
\author{Xiang Liu, Tianyao Huang, Yimin Liu
    \thanks{X. Liu, T. Huang and Y. Liu are with the Department of Electronic Engineering, Tsinghua University, Beijing, China (e-mail: liuxiang16@mails.tsinghua.edu.cn, \{huangtianyao; yiminliu\}@tsinghua.edu.cn). T. Huang is the corresponding author.}
}
\title{Transmit Design for Joint MIMO Radar and Multiuser Communications with Transmit Covariance Constraint}
\begin{document}
    \maketitle
    
    \begin{abstract}
        In this paper, we consider the design of a multiple-input multiple-output (MIMO) transmitter which simultaneously functions as a MIMO radar and a base station for downlink multiuser communications. In addition to a power constraint, we require the covariance of the transmit waveform be equal to a given optimal covariance for MIMO radar, to guarantee the radar performance. With this constraint, we formulate and solve the signal-to-interference-plus-noise ratio (SINR) balancing problem for multiuser transmit beamforming via convex optimization. Considering that the interference cannot be completely eliminated with this constraint, we introduce dirty paper coding (DPC) to further cancel the interference, and formulate the SINR balancing and sum rate maximization problem in the DPC regime. Although both of the two problems are non-convex, we show that they can be reformulated to convex optimizations via the Lagrange and downlink-uplink duality. In addition, we propose gradient projection based algorithms to solve the equivalent dual problem of SINR balancing, in both transmit beamforming and DPC regimes. The simulation results demonstrate significant performance improvement of DPC over transmit beamforming, and also indicate that the degrees of freedom for the communication transmitter is restricted by the rank of the covariance.

    \end{abstract}
    
    \section{Introduction}

    Joint radar and communications on a single platform is an emerging technique which can reduce the cost of the platform, achieve spectrum sharing, and enhance the performance via the cooperation  of  radar and communications \cite{paul_survey_2017-1,liu_mu-mimo_2018, 9127852}.  
    Because of these promising advantages, numerous schemes  are proposed in recent years to implement joint radar and communications, including multi-functional  waveform design \cite{5776640, zhang_modified_2017, mccormick_simultaneous_2017-1, kumari_ieee_2018, liu_toward_2018, 9420261},  information embedding \cite{blunt_embedding_2010, hassanien_dual-function_2016,7575457, wang_dual-function_2018, 9093221, 9345999}, joint transmit beamforming \cite{mccormick_simultaneous_2017-1, liu_mu-mimo_2018, liu_toward_2018, 9124713, liu2021cramerrao} and so on.  

    We focus on the joint transmit beamforming scheme here, which achieves spatial multiplexing of  radar and communications by forming multiple transmit beams towards the radar targets and communication receivers.
    Previous works based on joint transmit beamforming mainly consider the joint design of a \gls{MIMO} radar and downlink multiuser MIMO communications.
    In particular, these works consider the optimization of the MIMO radar performance, such as the beam pattern mismatch \cite{liu_mu-mimo_2018,9124713} and Cramér-Rao Bound \cite{liu2021cramerrao}, under  individual  \gls{SINR} constraints at the communication receivers.
    Alternatively, some variants of the design \cite{liu_mu-mimo_2018,liu_toward_2018} simultaneous optimize the  performance of radar and communications in the objective function.
    However, MIMO radars exhibit performance trade-off with multiuser communications in these works.
    In other words, to guarantee the  \glspl{SINR} at users, the achievable performance of MIMO radar is worse than the counterpart of a separate MIMO radar without considering communications.
    In high speed communication scenario, the performance loss of MIMO radar can be significant to achieve high \glspl{SINR} at  users \cite{9124713}.

    In our work, we consider a joint MIMO radar and multiuser communication system, in which radar is the primary function and communication is the secondary function. 
    Under this scenario, the efficiency of MIMO radar should be first guaranteed without any performance loss of radar. In this regard, we study the joint design, where we optimize the communication performance under the requirement that the radar maintains its optimal performance without communications.
    Literature on MIMO radar reported that the performance of MIMO radars highly depends on the  covariance of the transmit waveform \cite{stoica_probing_2007,fuhrmann_transmit_2008, li_range_2008,1703855}.
    Therefore, we formulate transmitter optimizations for communications, under  the transmit covariance constraint that the covariance of the transmit waveform is equal to the given optimal  one for MIMO radar without communication function.
    The proposed approach in \cite{liu_toward_2018} considers a similar constraint, but constrains the instantaneous covariance and needs to optimize the instantaneous transmit waveform with the constraint.
    Different from \cite{liu_toward_2018}, we constrain the average covariance, and optimize the precoding matrices as in \cite{liu_mu-mimo_2018,9124713,liu2021cramerrao}.

    At the transmitter,  linear precoding technique is usually applied to generate the transmit waveform, which performs transmit beamforming to improve the \glspl{SINR} at downlink users \cite{730452,visotsky_optimum_1999,wiesel_zero-forcing_2008,bjornson_optimal_2014}.
    For transmit beamforming, we formulate the \gls{SINR} balancing \cite{wiesel_linear_2006,9027103} problem for multiuser communications, which designs the precoding matrices by maximizing the worst \gls{SINR} at the users with the transmit covariance constraint.
    We show that the  problem can be reformulated to a linear conic optimization \cite{Luenberger2016}, and further propose an  iteration method to solve its Lagrange dual \cite{boyd_vandenberghe_2004}, which has a low complexity and converges fast.

    Despite the low complexity of transmit beamforming, the numerical results show that, the transmit covariance constraint, introduced by the radar function, typically results in low \glspl{SINR}  via  transmit beamforming.
    This still happens even if the \gls{SNR} is high, because the inter-user interference cannot be canceled under such constraint.
    To further eliminate the interference, we investigate the application of  \gls{DPC}  \cite{costa_writing_1983-1}, which reveals that the interference in an \gls{AWGN} channel does not reduce the capacity if the interference is known at the transmitter, and was  applied to the interference canceling in downlink multiuser communications  \cite{1207369,viswanath_sum_2003, vishwanath_duality_2003, 1327794, weingarten_capacity_2006, lee_dirty_2006, 4063519, 4203115}.

    We apply  \gls{DPC}  for the transmit design of joint radar and communications, and formulate the \gls{SINR} balancing problem for \gls{DPC} with the strict radar performance constraint.
    Considering the  optimization is non-convex,  we derive its equivalent dual problem from the Lagrange dual of the power minimization problem, which finds the minimal transmit power to achieved given \glspl{SINR} at users.
    The dual problem has a convex structure, and we proposed a gradient based iteration method to solve it.
    Meanwhile, we consider to maximize the sum rate of the users, which is still a non-convex optimization problem.
    Using the downlink-uplink duality, we show that it is equivalent to the sum rate maximization for an equivalent uplink channel, which is expressed as a convex-concave saddle point problem, and further prove that the saddle point can be obtained by solving an equivalent linear conic optimization.	
    The simulation results in the \gls{DPC} regime show that the \gls{DPC} can significantly improve the obtained \glspl{SINR} at users compared to transmit beamforming. 
    
    While the proposed \gls{DPC} approaches relieve the interference issues for downlink users, it should be noted that the hard constraint on transmit covariance matrix essentially limits the communication performance. In particular, we reveal that the degrees of freedom of the communication transmitter is limited by the rank of the transmit covariance, from the following two observations, with the number of users denoted by $K$ and the rank denoted by $r_o$: 
    \begin{itemize}
        \item The balanced \gls{SINR} for \gls{DPC} encounters a significant decrease when $K$ exceeds $r_o$;
        \item Under a high transmit \gls{SNR}, the maximal sum rate for \gls{DPC} is asymptotically affine in the transmit \gls{SNR} in dB.
        The multiplexing  gain \cite{978730}, (i.e.,  the rate gain in bits/channeluse for  every $3$dB  power  gain), is $K$ with a  power constraint \cite{lee_dirty_2006}, while it reduces to $\min(K, r_o)$ with the transmit covariance constraint.      
    \end{itemize}

    The rest of the paper is organized as follows. In Sec.~\ref{sec-2}, we give the signal model, introduce the transmit covariance constraint, and formulate the optimization for communications, via both transmit beamforming and \gls{DPC}.
    In Sec.~\ref{sec-3}-\ref{sec-5}, we study the numerical methods to solve the \gls{SINR} balancing  for transmit beamforming,  \gls{SINR} balancing for \gls{DPC} and sum rate maximization for \gls{DPC}, respectively.
    We demonstrate the communication performance and the convergence property  of the proposed iteration algorithms via numerically results  in Sec.~\ref{sec-6}.
    Sec.~\ref{sec-7} draws the conclusion.
    
    \subsubsection*{Notations} 
    For a matrix $\bm X$, we denoted its $(i,j)$-th element by $\bm X_{i,j} $ or $[\bm X]_{i,j}$.
    For an integer $K > 0$,  $\bm 1_K$ represents a $K$-dimensional vector whose elements are all $1$.
    In this paper, $\mathcal{C} (\cdot)$ represents the column space of a matrix, and $(\cdot) ^ \dagger$ represents the Moore-Penrose inverse \cite{8a8342c0aefb4c0b9885d8e81b6fa219}.
    
    \section{Joint transmit design problems} \label{sec-2}
    
    Consider a joint transmitter which simultaneously functions as a \gls{MIMO} radar transmitter and a base station for downlink multiuser communications.
    In the transmitter, radar and communications share transmit signal, whose expression is given in Sec.~\ref{sec-2-1}, following \cite{9124713}. 
    Considering the radar performance, we introduce a transmit covariance constraint to the transmit signal in Sec.~\ref{sec-2-2}.
    With this radar constraint, we formulate the general transmit beamforming optimizations for communications in Sec.~\ref{bf}, and  extend the optimizations  to the \gls{DPC} regime in Sec.~\ref{dpc}.

    \subsection{Shared transmit signal} \label{sec-2-1}

    The transmitter is equipped with a transmit array with $M$ antennas and sends independent communication symbols to $K $ users, where $K \leq M$.   
    The average transmit power is $P$.   
    The transmit signal $\bm x(n) $ for the shared transmit array is generated by the joint linear precoding scheme in \cite{9124713}.
    In particular,  $\bm x(n) $  is the sum of linear precoded radar waveforms and communication symbols, given by
    \begin{equation} \label{eq-xn}
        \bm{x}(n) = \bm{W}_r \bm{s} (n) +  \bm{W}_c \bm{c} (n), \ n = 0,\ldots,N-1,
    \end{equation}   
    where  $N$ is the number of samples.
    Here, $\bm s(n)  = [ s_1(n), \ldots, s_M(n) ] ^ T$ includes $M$ orthogonal radar waveforms, and the $ M \times M $ matrix $\bm W_r$ is the precoding matrix for radar \cite{hassanien_dual-function_2016}.
    The orthogonality of radar waveforms means that $   (1/N) \sum_{n=0}^{N-1} \bm s(n) \bm s ^ H (n) = \bm I _ M$.
    The $K$ parallel communication symbols to the users are contained in $ \bm c(n) = \left[ c_1(n), \ldots, c_K(n)  \right] ^ T$, precoded by the $M \times K$ matrix $\bm W_c$.

    Following \cite{bjornson_optimal_2014,liu_mu-mimo_2018,9124713}, we rely on the following conditions to the communications symbols and radar waveforms:
    \begin{enumerate}[(a)]
        \item The communication symbols to different users are mutually independent, have zero mean,  and are normalized to have unit average power. Therefore,  $\mathbf{E}(\bm s(n) \bm s^H(n)) = \bm I_M $.
        \item The radar waveforms and communication symbols are statistically independent.
    \end{enumerate}

    Given $ \{ \bm s(n) \}$ and  $ \{ \bm c(n) \}$, transmit design for joint MIMO radar and communication becomes designing $\bm W_r$ and $\bm W_c$.

    \subsection{Transmit covariance constraint for radar} \label{sec-2-2} 
    
    The radar is monostatic so that the communication signals  can also be used for target detection because they are completely known at the radar receiver.
    Unlike phased array radars, \gls{MIMO} radars transmit independent or partially correlated signals from the array elements.
    The performance of \gls{MIMO} radar highly depends on its  transmit covariance
    \begin{equation} \label{eq-R}
        \bm R = \mathbf{E} \Big\{ \frac{1}{N} \sum_{n=0}^{N-1} \bm x (n) \bm x ^ H (n)  \Big\}.
    \end{equation}   
    It was shown that the transmit beam pattern \cite{stoica_probing_2007}, the angular estimation accuracy \cite{1703855, li_range_2008}  and the detection performance of radars \cite{1703855} is determined by $\bm R$.    
    Substituting \eqref{eq-xn} into \eqref{eq-R}, $\bm R$ is  given by
    \begin{equation} \label{eq-pt-ro}
        \bm R = \bm W_r \bm W_r ^ H + \bm W_c \bm W_c ^ H.
    \end{equation}
    Given the average transmit power $P$, $\bm R$ should obey $\mathrm{tr}(\bm R) = P$. 
    To guarantee the radar performance, in a solely \gls{MIMO} radar without communications,  the  transmit covariance is optimized, yielding $\bm R_{o}$, under a power constraint as in \cite{stoica_probing_2007,li_range_2008,fuhrmann_transmit_2008}. 
    
    Then in the joint design considered in this paper,  $\bm W_c$ and $\bm W_r$ are constrained so that the obtained $\bm R$ in \eqref{eq-pt-ro} equals to $\bm R_{o}$. Hence, the joint radar and communications system achieves the optimal radar performance as the  solely radar system without communications.  
    In the following, we design $\bm W_c$ and $\bm W_r$ by transmit beamforming and \gls{DPC}, respectively, to optimize the communication performance under this constraint. 
    
    This approach is  different from existing precoder design methods in \cite{liu_mu-mimo_2018, 9124713, liu2021cramerrao} where they sacrifice the radar performance to achieve the desired \glspl{SINR} for communications.  Particularly in their methods, $\bm W_c$ and $\bm W_r$ are constrained to meet the minimum requirements on  \glspl{SINR}, and are optimized to improve the radar performance. 
    
    
    \subsection{Transmit beamforming for multiuser communications} 
    \label{bf}
    
    For downlink multiuser communication, transmit beamforming is performed to increase the signal power at intended users and reduce interference to non-intended users \cite{730452, bjornson_optimal_2014}. 
    Here, a vector \gls{GBC} \cite{1054727} is considered in which each user is equipped with a single receive antenna.
    The channel is denoted  by a $K \times M$ matrix $\bm H$.
    The channel output of the   \gls{GBC}  is given by \cite{9124713}
    \begin{equation} \label{eq-yn}
        \bm   r (n) = \bm H \bm x(n)  + \bm  v(n)  = 
        \bm H \bm W_c \bm c (n) + \bm H  \bm W_r \bm s(n) + \bm v (n).
    \end{equation}
    Here, the $k$-th elements of $\bm r(n)$ represents the received signal at the $k$-th user,  and $\bm v(n) $ is complex \gls{AWGN} whose covariance is $\sigma ^ 2 \bm I_K$. For convenience, we let $\sigma^2 = 1$ in the sequel.
    
    In \eqref{eq-yn}, each user receives the mixture of its own signals, the interference from other users, the radar signal and the noise.  
    Let $\bm F = \bm H \bm W_c$ and $  \bm G = \bm H \bm W_r$.
    For the $k$-th user,   the sum power of received signal, including both desired signal and interference is $ \sum_{i = 1}^{M}  |  \bm{F}_{k,i} |^2 + \sum_{i = 1}^{M}  |  \bm{G}_{k,i} |  ^2 =  [ \bm H \bm R_o \bm H ^ H ]_{k,k}$,    
    and the power of desired signal  is $| \bm F_{k,k} | ^ 2$.
    Then,  in the transmit beamforming regime,  the \gls{SINR} at the $k$-th user is given by \cite{liu_mu-mimo_2018,9124713}
    \begin{equation} \label{eq-sinr}
        \mathrm{SINR}_k = \frac{| \bm F_{k,k} | ^ 2 }{\sum_{i \neq k}  |  \bm{F}_{k,i} |^2 + \sum_{i = 1}^{M}  |  \bm{G}_{k,i} |  ^2 + 1 },
    \end{equation}
    for $k = 1,\ldots,K$. 
    In the regime of \gls{DPC}, the interference is treated differently. Hence, the definition of \gls{SINR} is different, as will be introduced later in~\ref{dpc}.
    
    Our goal is to maximize a  utility function $f(\mathrm{SINR}_1,\ldots,\mathrm{SINR}_K)$ which is increasing in the \glspl{SINR} \cite{bjornson_optimal_2014}, with the transmit covariance constraint.
    The optimization problem is stated as
    \begin{subequations} \label{eq-max-bf}
        \begin{align}
            \max_{\bm W_c, \bm W_r}  \ & \ f(\mathrm{SINR}_1,\ldots,\mathrm{SINR}_K) \\
            \mathrm{s.t.} \ & \ \bm R_o = \bm W_r \bm W_r ^ H + \bm W_c \bm W_c ^ H. \label{eq-max-bf-b}
        \end{align}
    \end{subequations}
    There are some common utility functions in existing works.
    For \gls{SINR} balancing, the utility functions is the worst \gls{SINR} of the users, given by \cite{wiesel_zero-forcing_2008}
    \begin{equation} \label{eq-f1}
        f(\mathrm{SINR}_1,\ldots,\mathrm{SINR}_K) = \min_{1 \leq k \leq K} \ \mathrm{SINR}_k.
    \end{equation}
    For  sum rate maximization, the  utility function is \cite{wiesel_zero-forcing_2008, bjornson_optimal_2014}
    \begin{equation} \label{eq-f2}
        f(\mathrm{SINR}_1,\ldots,\mathrm{SINR}_K) = \sum_{k=1}^{K} \log(1+\mathrm{SINR}_k).
    \end{equation}

    We then reformulate the optimization problems with respect to $\bm F$. 
    Letting $\bm R_h = \bm H \bm R_{o} \bm H ^ H$,  it can be shown that the constraint in \eqref{eq-max-bf-b} is equivalent to \cite{schur}
    \begin{equation} \label{eq-ff}
        \bm F \bm F^H \preceq \bm R_h \  \Leftrightarrow \
        \left[
        \begin{array}{cc}
            \bm R_h 	&  \bm F \\
            \bm F^H 	&   \bm I_K 
        \end{array}   	
        \right] \succeq 0,
    \end{equation}
    which is convex.
    Meanwhile, we let  $s_k = ([\bm R_h]_{k,k} + 1 )^{1/2}$, simplifying the \gls{SINR} at the $k$-th as
    \begin{equation}
        \mathrm{SINR}_k = \frac{| \bm F_{k,k} | ^ 2 }{s_k ^ 2 - | \bm F_{k,k} | ^ 2  }, \ k = 1,\ldots,K.
    \end{equation}
    Introducing the slack variables $\gamma_k \leq \mathrm{SINR}_k $, for $k = 1,\ldots, K$, we reformulate \eqref{eq-max-bf} into an optimization with respective to $\bm F$ and $\bm \gamma =  [ \gamma_1, \ldots, \gamma_K]^T$:
    \begin{equation} \label{eq-max-bf1}
         \max _{\bm F,\bm \gamma} \   f(\gamma_1,\ldots, \gamma_K) , \ \ 
            \mathrm{s.t.}	  \   \eqref{eq-ff} \ \mathrm{and} \  | \bm F_{k,k} |  \geq \sqrt{\frac{\gamma_k}{1+\gamma_k}} s_k, \ k = 1,\ldots,K.
    \end{equation}
    After solving \eqref{eq-max-bf1}, the optimum of the original problem in \eqref{eq-max-bf} can be computed by
    \begin{equation}
        \bm W_c = \bm R_{o}^{1/2}  ( \bm H \bm R_ o ^ {1/2} )^{\dagger} \bm F, \  \bm W_r = ( \bm  R _ o - \bm W_c \bm W_c ^ H   )^{1/2} .
    \end{equation}  
    For \gls{SINR} balancing, the solver for \eqref{eq-max-bf1} will be provided in Sec.~\ref{sec-3}.

    \subsection{Dirty paper coding for multiuser communications} \label{dpc}

    The numerical results in  Sec. \ref{sec-simu2} and in \cite{9124713}  indicate that  the achievable  \glspl{SINR} with \eqref{eq-max-bf} and  \eqref{eq-max-bf1} may be  low  with the transmit covariance constraint, and some trade-off  designs \cite{liu_mu-mimo_2018,liu_toward_2018,9124713} were proposed to improve the  \gls{SINR} by relaxing the constraint.
    To further improve the \glspl{SINR}, one can perform non-linear precoding techniques, which eliminate the interference by encoding the communication signals to adapt the interference.
    In particular,  we consider \gls{DPC}  \cite{costa_writing_1983-1}, which reveals that the capacity of an \gls{AWGN} channel corrupted by interference equals to the capacity of  an  interference-free \gls{AWGN} channel if the interference is known at the transmitter.
    \Gls{DPC} was applied to \gls{GBC} to eliminate the effect of inter-user interference \cite{1207369, viswanath_sum_2003, vishwanath_duality_2003, 1327794}, and was shown to be able to achieved the capacity region of \gls{MIMO} \gls{GBC} \cite{weingarten_capacity_2006}.

    We apply \gls{DPC} to the \gls{GBC} in \eqref{eq-yn} by serially encoding the source signal of each user.
    The encoding operations are conducted in the order $\{1,\ldots,K\}$.
    When performing \gls{DPC} for the $k$-th user, $\{c_1(n)\},\ldots,\{ c_{k-1}(n)\}$ are already encoded while $\{  c_{k+1}(n) \},\ldots,\{  c_{K}(n) \}$ are not encoded yet.
    Thus, the interference from the $1,\ldots,(k-1)$-th user is known while the interference from the  $k+1,\ldots,K$-th user is unknown at the transmitter.
    The radar interference is also known at the transmitter.
    Therefore, the effective \gls{SINR} at the $k$-th user in the \gls{DPC} regime is \cite{viswanath_sum_2003}
    \begin{equation} \label{eq-sinr-dpc}
        \mathrm{SINR}_k^{\mathrm{dpc}} = \frac{|\bm F_{k,k} | ^ 2 }{\sum_{i > k}  | {\bm F}_{k,i} |^2 +  1 },
    \end{equation}
    for $k = 1,\ldots,K$.     
   
    Comparing \eqref{eq-sinr} and \eqref{eq-sinr-dpc}, it is observed that \gls{DPC} improves the \gls{SINR} compared with transmit beamforming by eliminating the    interference.
    It is worth noting that when $\bm R_h$ is non-singular, one can perform \gls{ZF} \gls{DPC}, namely completely cancel the interference, by computing a lower triangular 
    $\bm F$ via the Cholesky decomposition \cite{8a8342c0aefb4c0b9885d8e81b6fa219} of $\bm R_h$, while \gls{ZF} transmit beamforming is generally not applicable, since it requires $\bm R_h $ be a diagonal matrix \cite{9124713}.

    Similarly, we maximize the utility function in the \gls{DPC} regime.
    The optimization problem is stated as
    \begin{equation} \label{eq-max-dpc}
        \max_{\bm W_c, \bm W_r} \ \  f(\mathrm{SINR}_1^{\mathrm{dpc}} ,\ldots,\mathrm{SINR}_K^{\mathrm{dpc}} ) , \ 
        \mathrm{s.t.} \  \eqref{eq-max-bf-b} .
    \end{equation}
    We note that the  \glspl{SINR}  in the \gls{DPC} regime  also explicitly depend on   $\bm F$.
    As in the transmit beamforming regime, we can reformulate  \eqref{eq-max-dpc} into  an optimization with respect to $\bm F$.
    Introducing the slack variables $\gamma_k \leq \mathrm{SINR}_k  ^ {\mathrm{dpc}}$, for $k = 1,\ldots, K$,  \eqref{eq-max-dpc}  is equivalent to
    \begin{equation}  \label{eq-max-dpc1}
        \max _{\bm F,\bm \gamma} \  f(\gamma_1,\ldots, \gamma_K), \ \	\mathrm{s.t.}  \ \eqref{eq-ff} \ \mathrm{and}    \
           \frac{1}{\gamma_k}  |\bm F_{k,k} | ^ 2 \geq  \sum_{i > k}  | {\bm F}_{k,i} |^2 +  1 , \ k = 1,\ldots,K.  
    \end{equation}
    
    In Sec.~\ref{sec-4} and~\ref{sec-5}, we will consider the two criteria, \gls{SINR} balancing and sum rate maximization, for \gls{DPC}, repectively. 
    Prior to these \gls{DPC} approaches, we first discuss below in Sec.~\ref{sec-3} the \gls{SINR} balancing in the transmit beamforming  regime as a benchmark. 
    
    \section{SINR balancing for transmit beamforming} \label{sec-3}
    
    In this section, we provide optimization methods to solve the SINR balancing problem in the transmit beamforming regime.
    The optimization problem is \eqref{eq-max-bf1} with the target function given by 
    $  f(\gamma_1,\ldots, \gamma_K) = \min _{1 \leq k \leq K} \, \gamma_k$. 
    We will first reformulate it into a linear conic optimization  that can be effectively solved by on-the-shelf optimization solvers, shown in Sec.~\ref{subsec: linear conic}. 
    Later in Sec.~\ref{sec-bf-2}, we will further provide a more efficient, dual program based method. 
    
    \subsection{Conic optimization solution}
    \label{subsec: linear conic}
    
    While the \gls{SINR} constraint in \eqref{eq-max-bf1} are non-convex, we can convert it to a convex one, and hence reformulate \eqref{eq-max-bf1} to a linear conic optimization.  
    
    Defining the balanced \gls{SINR}  $\gamma =  \min _{1 \leq k \leq K} \, \gamma_k$, we reformulate the optimization into
    \begin{equation}  \label{eq-sinr-p1}
        \max _{\bm F,\gamma} \ \gamma, \ \
            \mathrm{s.t.}	  \  \eqref{eq-ff} \ \mathrm{and} \   | \bm F_{k,k} |  \geq \sqrt{\frac{\gamma}{1+\gamma}} s_k, \ k = 1,\ldots,K.
    \end{equation}
    %
    Note that for a feasible $\bm F$ to \eqref{eq-sinr-p1}, rotating its $k$-th column by a scalar phase factor $e^{j \theta_k}$ does not violate the feasibility \cite{4203115, bjornson_optimal_2014}.
    Therefore, we only need to consider $\bm F$ with real diagonal elements.
    Introducing a new variable $t = \sqrt{\gamma / (1+\gamma)}$, \eqref{eq-sinr-p1} is equivalent to the following:
    \begin{equation} \label{eq-sinr-p2}
         \max _{\bm F,t} \  t, \ \
            \mathrm{s.t.}	  \  \eqref{eq-ff} \ \mathrm{and} \   \Re \{ \bm F_{k,k} \}  \geq t s_k, \ k = 1,\ldots,K.
    \end{equation}
    This is now solvable by linear conic programming. 
    
    The scale of \eqref{eq-sinr-p2} can be further reduced when $\bm R _h$ is singular.  
    We let $ r \leq K$ be the rank of $\bm R_h$, and write the  eigen decomposition of $\bm R_h$ as $ \bm R_h = \bm U \bm \Sigma_r \bm U^H $,  where $\bm \Sigma_r $ is a $r \times r$ diagonal matrix and $\bm U ^ H \bm U  = \bm I_r$.
    Following the constraint in  \eqref{eq-ff}, we let $\bm F =  \bm U \bm \Sigma_r ^ {1/2} \bm F_u$. Then  \eqref{eq-sinr-p2} becomes an optimization with respect to $\bm F_u$:
    \begin{equation} \label{eq-sinr-p3}
         \max _{\bm F_u,t} \  t,  \ \
            \mathrm{s.t.}	  \  \bm F_u \bm F_u ^ H \preceq \bm I_r , \ \Re \{ \bm u_k ^ H \bm f_{k} \}  \geq t s_k, \ k = 1,\ldots,K, 
    \end{equation}
    where $\bm f_{k}$ is the $k$-th column in $\bm F_u$ and  $ \bm u_k ^ H$ is the $k$-th row in $\bm U \bm \Sigma_r ^ {1/2} $.
    In \eqref{eq-sinr-p3}, the dimension of $\bm F_u$ reduces to $r \times K$ if $r < K$.

    \subsection{Dual program solution} \label{sec-bf-2}
    
    Here, we propose a simpler iteration method for \eqref{eq-sinr-p3} based on its Lagrange dual program. 
    Define the Lagrange function \cite{boyd_vandenberghe_2004} of \eqref{eq-sinr-p3} by
    \begin{equation}
    	\mathcal{L}_1 (\bm F_u,t, \bm Y, \bm d ) = t + \mathrm{tr} \left\{  \bm Y ( \bm I_r -  \bm F_u \bm F_u^H) \right \} + \sum_{k=1}^K d_k \left( \Re \{ \bm u_k ^ H \bm f_{k} \}  - t  s _ k \right),
    \end{equation}
    where $\bm Y \succeq 0$ and $\bm d = [d_1,\ldots,d_K]^T \geq 0 $ are associated dual variables.
    We first derive the  dual of \eqref{eq-sinr-p3} by maximizing the  Lagrange function  with respect to $\bm F_u$ and $ t $, and then show that the  dual problem is equivalent to  
    \begin{equation} \label{eq-dual-problem-d}
        \min _ {  \bm d \geq 0 } \mathrm{tr} \Big\{ (\bm D \bm D ^ H    )^{1/2}   \Big\}, \ \mathrm{s.t.}  \ 		\bm s^ T \bm d =  1,
    \end{equation}
    where $\bm D = \left[ d_1 \bm u_1, \ldots, d_K \bm u_K \right]$ and $\bm s =  [s_1, \ldots, s_K] ^ T$.
    The derivations are given in Appendix~\ref{A-5}.

    Generally, solving \eqref{eq-dual-problem-d} can be  simpler than \eqref{eq-sinr-p3}   as the variable is only a $K$ dimensional vector.
    The  constraints in  \eqref{eq-dual-problem-d}  are all linear, and the objective function is actually the nuclear norm \cite{5452187} of $\bm D$, which is convex in $\bm d$.
    Nuclear norm minimization with linear constraints can be solved  by  optimization softwares such as CVX \cite{cvx, gb08}.
    
    We also propose a gradient projection  based method to solve  \eqref{eq-dual-problem-d}. 
    The key step is to calculate the descend direction at a point $\bm d$, denoted by $\bm  \Delta _d (\bm d)$, under the constraint. 
    Let $h(\bm d)$ denote the  objective function in \eqref{eq-dual-problem-d}, which  is differential if $\bm d$ is  strictly feasible, i.e. $\bm d > 0$ and $\bm s ^ T \bm d = 1$.
    The gradient of $h(\bm d)$, denoted by $\nabla h (\bm d) \in \mathbb{R} ^ K$, is given by
    \begin{equation}
        [ \nabla h (\bm d)]_k  = 
        d_k ^ {-1}  [ ( \bm D^H  \bm D  )^{1/2} ]_{k,k},
    \end{equation}
    for $k = 1,\ldots,K$.   
    Considering the constraints, we compute $\bm  \Delta _d (\bm d)$ from a projection operation \cite{doi:10.1137/S1052623497330963}: 
    \begin{equation}
        \bm  \Delta _d (\bm d) = \mathcal{P}_{\Omega_1} \left( \bm d - \nabla h (\bm d)    \right) - \bm d,
    \end{equation}
    where  $ \mathcal{P}_{\Omega_1} (\cdot)$ is the orthogonal projection onto the constraint set $ \Omega_1 = \{  \bm d \, | \,  \bm d \geq 0, \ \bm s^T \bm d = 1   \}$.
    The projection does not have a close form expression.
    Nevertheless, it can be computed by at most $K$ loops.
    The details on the computation of the projection is omitted here, and is given in Appendix~\ref{A-2}. 
    
    With the obtained $\bm  \Delta _d (\bm d)$ from above loops, 
    gradient projection is performed via the following iterations:
    \begin{equation} \label{eq-gradient-descend}
        \bm d ^ {(\ell+1)} := \bm d ^ {(\ell )} + \alpha ^ {(\ell )} \bm   \Delta _ d (\bm d  ^ {(\ell )} ) , \ \ell =0,1,\ldots 
    \end{equation}
    Here, the step size $\alpha ^ {(\ell )} \in (0,1) $ can be determined by backtracking search \cite{boyd_vandenberghe_2004}.
    The initial value $\bm d ^ {(0)}$ should be strictly feasible. 
    The iterations in \eqref{eq-gradient-descend} can be stopped if $ \|  \bm   \Delta _ d ( \bm d  ^ {(\ell )} )  \|_2  $ is small enough.

    Then we compute $t$ and $\bm F_u$ for the primal problem.
    Let $L$ be the number of performed iterations at convergence.
    With strong duality, $t$ is given by $t := h(\bm d^{(L)})$.
    From the \gls{KKT} conditions \cite{boyd_vandenberghe_2004}, it holds that $\bm Y \bm F_u = \frac{1}{2} \bm D$ at the optimum of \eqref{eq-sinr-p3} and its dual, where $ \bm Y = \frac{1}{2} (   \bm D \bm D ^ H  )^{1/2}  $ according to Appendix~\ref{A-5}.
    Thus, we compute $\bm F_u$ via 
    \begin{equation}
        \bm F_u :=   \big[ \bm D ^{(L)}  ( \bm D ^{(L)}  ) ^ H \big] ^ {-1/2} \bm  D ^{(L)} , 
    \end{equation}
    where $\bm D ^{(L)} = \big[ d_1 ^{(L)} \bm u_1, \ldots, d_K ^{(L)} \bm u_K \big]$.

    %

    \section{SINR balancing for DPC} \label{sec-4}
    
    In this section, we provide optimization methods to solve the \gls{SINR} balancing problem in the \gls{DPC} regime. 
    Like \eqref{eq-sinr-p1}, we define the balanced \gls{SINR}  $\gamma =  \min _{1 \leq k \leq K} \, \gamma_k$, and then the optimization  is expressed as a problem with respect to $\bm F$ and $\gamma$:     
    \begin{subequations} \label{eq-balancing-dpc}
        \begin{align}
            \max_{\bm F, \gamma} \ & \  \gamma, \  \ \mathrm{s.t.} \ \bm F \bm F^H \preceq \bm R_h,  \\
            & \  \frac{1}{\gamma}  |\bm F_{k,k} | ^ 2 \geq  \sum_{i > k}  | {\bm F}_{k,i} |^2 + 1 , \ k = 1,\ldots,K. \label{eq-balancing-dpc-b}
        \end{align}
    \end{subequations}
    
    In Sec.~\ref{subsec: power minimization}, we provide a power minimization based solution, which is solvable with some on-the-shelf optimization toolboxes. Later in Sec.~\ref{sec-dpc-b}, we give a more efficient solution based on its dual problem.

    \subsection{Solution via power minimization}
    \label{subsec: power minimization}
    
    Unlike the \gls{SINR} balancing for transmit beamforming,  \eqref{eq-balancing-dpc} is nonconvex since the constraint in \eqref{eq-balancing-dpc-b} is nonconvex.
    Despite its non-convexity,  \eqref{eq-balancing-dpc} can be solved with a  polynomial time complexity. Firstly, we formulate the corresponding  power minimization problem \cite{visotsky_optimum_1999,wiesel_linear_2006,4203115}, which is a linear conic optimization, solvable with a  polynomial time complexity. Based on the results, the optimizer of the original problem is then obtained with bisection search, also solvable with a  polynomial time complexity. 
    
    To formulate the power minimization problem, we assume that $ \gamma $ is given. 
    We then  
    adjust the transmit power by  a scaling factor $\lambda \geq 0$, and thus the transmit power and covariance becomes $\lambda P$ and $\lambda \bm R_o$, respectively.
    Regarding $\lambda$ as a variable, the minimal transmit power problem seeks for the minimal $\lambda$ under the \gls{SINR} constraints, 
    given by
    \begin{subequations} \label{eq-minimal-power}
        \begin{align}
            \min_{\lambda, \bm F} \ & \  \lambda  , \  \ \mathrm{s.t.} \ \bm F \bm F^H \preceq \lambda \bm R_h, \label{eq-minimal-power-a}  \\
            & \  \frac{1}{\gamma}  |\bm F_{k,k} | ^ 2 \geq  \sum_{i > k}  | {\bm F}_{k,i} |^2 +  1 , \ k = 1,\ldots,K. \label{eq-minimal-power-b}
        \end{align}
    \end{subequations}
    Let $\lambda^*(  \gamma)$ be the optimal value of \eqref{eq-minimal-power}.

    The minimal transmit power problem is a solvable linear conic optimization.
    In \eqref{eq-minimal-power}, the constraint in \eqref{eq-minimal-power-a}   can be recast to a semi-definite constraint as in \eqref{eq-ff}.
    Similar to \eqref{eq-sinr-p2}, we replace $|\bm F_{k,k} |$ by $\Re \{ \bm F_{k,k} \}$ in \eqref{eq-minimal-power-b}, and the constraint becomes  second order cone constraints.
    Therefore, the optimization in \eqref{eq-minimal-power} can be effectively solved by linear conic solvers with a polynomial time complexity.

    The relationship between the original problem  \eqref{eq-balancing-dpc} and \eqref{eq-minimal-power} is built from this observation:  The \gls{SINR} $ \gamma$ is achievable  if and only if $ \lambda ^*(  \gamma) \leq 1 $, namely the minimal transmit power to achieve the \gls{SINR} $\gamma$ is less than $P$.    
    Note that 
    $\lambda^*(\gamma )$ increases monotonically in $\gamma$. 
    Therefore, the optimal $\gamma$ in \eqref{eq-balancing-dpc}, denoted by $\gamma_o^*$, should satisfy $ \lambda^*(\gamma_o^* ) = 1 $, and can be found by a bisection search \cite{wiesel_linear_2006}, which can be finished in polynomial time.
    
    Similar to \eqref{eq-sinr-p3}, \eqref{eq-minimal-power} can be reformulated to an optimization with respect to $\bm F_u$. 
    Let $ r = \mathrm{rank} (\bm R_h)$, $ \bm R_h = \bm U \bm \Sigma_r \bm U^H $ be the eigen decomposition of $\bm R_h$, and $ \bm F = \bm U \bm \Sigma_r ^ {1/2} \bm F_u$.
    Then \eqref{eq-minimal-power} is reformulated to
    \begin{subequations} \label{eq-minimal-power2}
        \begin{align}
            \min_{\lambda, \bm F_u} \ & \ \lambda  , \  \ \mathrm{s.t.} \ \bm F_u \bm F_u^H \preceq \lambda \bm I_r, \label{eq-minimal-power2-a}  \\
            & \  \frac{1}{\gamma}  |\bm u_k ^ H \bm f_k | ^ 2 \geq  \sum_{i > k}  | \bm u_k ^ H \bm f_i |^2 +  1 , \ k = 1,\ldots,K, \label{eq-minimal-power2-b}
        \end{align}
    \end{subequations}
    where $\bm f_{k}$ is the $k$-th column in $\bm F_u$ and  $ \bm u_k ^ H$ is the $k$-th row in $\bm U \bm \Sigma_r ^ {1/2} $.
    When $r < K$, the dimension of $\bm F_u$ is less than that of $\bm F$.
    
    
   Although both linear conic optimization and bisection search are solvable and have polynomial time complexity, below we provide a more efficient approach for \eqref{eq-balancing-dpc}. 
    
    \subsection{Dual program solution} \label{sec-dpc-b}
    
    This section derives the dual problem of \eqref{eq-balancing-dpc}, and presents a gradient projection based method to solve it. 
    
    \subsubsection{Formulation of the dual problem}
   The optimal value of \eqref{eq-balancing-dpc} is equal to that of the following dual program:  
        \begin{subequations} \label{eq-sinr-balancing-dual}
        	\begin{align}    
            \min_{\bm Y \succeq 0}   \min_{ \bm d \geq 0, \gamma>0} \ \gamma ,  \ \  
            \mathrm{s.t.} \ \sum_{k=1}^{K} d_k &\geq 1,   \\
        		\mathrm{tr} (\bm Y  ) & = 1, \label{eq-dual-cond-a} \\
        		\bm Y+  \sum_{i<k} d_i \bm u_i \bm u_i ^ H &\succeq \frac{1}{\gamma} d_k \bm u_k \bm u_k ^ H, \ k =1,\ldots,K. \label{eq-dual-cond-b}
        	\end{align}
        \end{subequations}    
     The proof starts from the Lagrange dual problem of the power minimization in \eqref{eq-minimal-power2}, and derives the dual problems of \eqref{eq-minimal-power2} and \eqref{eq-balancing-dpc}, sequentially, as will be detailed in Appendix~\ref{A-6}.

    Note that \eqref{eq-sinr-balancing-dual} is still nonconvex, since the constraints in  \eqref{eq-dual-cond-b} are bilinear. 
    We propose an iterative method to solve it, summarized later in Sec.~\ref{subsub:fixed-point and gradient}, where the inner optimization problem is solved by fixed-point method and the outer is solved by gradient decent. 
    
    \subsubsection{Fixed-point iterations for inner minimization} 
    Under a given $\bm Y$, the inner minimization of \eqref{eq-sinr-balancing-dual}  is stated as
    \begin{equation} \label{eq-opty}
        \min_{ \bm d \geq 0, \gamma>0} \ \gamma ,  \ \  
        \mathrm{s.t.} \  \sum_{k=1}^{K} d_k \geq 1  \  \mathrm{and}  \ \eqref{eq-dual-cond-b}.
    \end{equation}
    Define the index set $\mathcal{I}(\bm Y) = \{ k \,  | \,  \bm u_k \notin  \mathcal{C}(\bm Y) , \ 1 \leq k \leq K \}$.
    The following equations should hold at the optimum of \eqref{eq-opty}:
    \begin{subequations} \label{eq-dd}
        \begin{align}
            &   \sum_{k = 1}^{K} d_k = 1, \   d_k = 0 \ \mathrm{for} \  k \in  \mathcal{I}(\bm Y), \\ & \gamma = d_k \bm u _ k ^ H (\bm Y + \sum_{i<k} d_i \bm u_i \bm u_i ^ H ) ^ \dagger \bm u_k \ \mathrm{for} \  k \notin  \mathcal{I}(\bm Y). \label{eq-dd-b}
        \end{align}
    \end{subequations}
    For the case of $k \notin \mathcal{I}(\bm Y)$, according to \cite{wiesel_linear_2006,4203115} we compute  $\{  d_k\}$ by the fixed-point iterations  given in Algorithm~\ref{fix-point}, which converges rapidly in practice. After $\{ d_k \}$ is obtained, $\gamma$ can be computed via \eqref{eq-dd-b}. 
    
    
    \begin{algorithm}
        \renewcommand{\algorithmicrequire}{\textbf{Input:}}
        \renewcommand{\algorithmicensure}{\textbf{Output:}}
        \caption{Fix-point algorithm to solve \eqref{eq-opty}}
        \label{fix-point}
        \begin{algorithmic}[1]
            \STATE Initialization:  $ \{ d_k  > 0 \}$.
            \REPEAT
            \STATE Compute $\hat{d}_k$ from \eqref{eq-dd-b}, for all $k \notin  \mathcal{I}(\bm Y)$:  
            $
             \hat{d}_k   \leftarrow \Big[  \bm u_k ^H ( \bm Y + \sum_{i<k} d_i  \bm u_i \bm u_i ^ H ) ^ \dagger  \bm u_k  \Big]^{-1}.
            $
            \STATE Store the value of $\{ d_k \}$, for all $k \notin  \mathcal{I}(\bm Y)$: $  d_k' \leftarrow d_k  $.
            \STATE Normalize $\{d_k\}$ so that their sum is $1$: 
          $ d_k  \leftarrow \frac{\hat{d}_k }{ \sum_{k = 1}^{K}\hat{d}_k },  \ \forall k \notin  \mathcal{I}(\bm Y). $
            \UNTIL $\sum_{k \notin \mathcal{I}(\bm Y)  }  | d_k - d_k '|^2 < \varepsilon $.  
            \ENSURE  Solution of  $d_k$, for $ k \notin  \mathcal{I}(\bm Y)$.
        \end{algorithmic}  
    \end{algorithm}
    
    \subsubsection{Gradient derivation for the outer minimization}
    Writing the optimal value of \eqref{eq-opty} as $\gamma_o(\bm Y)$,  we reformulate \eqref{eq-sinr-balancing-dual} into an optimization with respect to $\bm Y$:
    \begin{equation} \label{eq-sinr-balancing-dual-2}
        \min_{\bm Y \succeq 0}    \ \gamma_o(\bm Y)  ,  \ \  
        \mathrm{s.t.} \ \eqref{eq-dual-cond-a}.  
    \end{equation}
    We observe that, when $\bm Y \succ 0$, $\gamma_o(\bm Y)$ is a convex and differential function.
    Therefore, \eqref{eq-sinr-balancing-dual-2} can be solved via gradient projection. 
    
    We first give the expression of the gradient $ \nabla \gamma_o(\bm Y)  $ below, which we note that is non-trivial because there is no analytical expression of the objective function $  \gamma_o(\bm Y)  $. 
    For readability, we leave the derivation in Appendix~\ref{A-4}. 
    We define 
    \begin{equation} \label{eq-fk}
        \hat{\bm f}_k^*(\bm Y) = (\bm Y + \sum_{i<k} d_i ^ * (\bm Y) \bm u_i \bm u_i ^ H) ^ {-1} \bm u_k,
    \end{equation}
    where $ {d_k^*(\bm Y)} $ is the solution of \eqref{eq-opty} and \eqref{eq-dd}, and a $K \times K$ lower triangular matrix $\bm A$ by
    \begin{equation} \label{eq-A}
        \bm A_{k,i}  = -   d_k^* (\bm Y)|\bm u_i ^ H \hat{\bm f}_k^*(\bm Y)  |^2 , \
        \bm A_{k,k} = \bm u_k ^ H \hat{\bm f}_k^*(\bm Y)  , 
    \end{equation}
    for $1 \leq k \leq K$ and $1 \leq i < k$. 
    Letting $\bm a = ( \bm A^{-1} )^T \bm 1_K $, we have
    \begin{equation} \label{eq-gradY}
        \nabla \gamma_o(\bm Y)  = - \frac{\sum_{k=1}^{K}  a_k d_k^*  (\bm Y) \hat{\bm f}_k^*(\bm Y) 
            [ \hat{\bm f}_k^*(\bm Y)] ^ H}{\sum_{k=1}^{K}  a_k }, 
    \end{equation}
    where $a_k$ is the $k$-th element in $\bm a$.
    
    Under the constraint \eqref{eq-dual-cond-a}, the descend direction at a point $\bm Y  \succ 0$ is computed from a projection operation \cite{doi:10.1137/S1052623497330963}:
    \begin{equation}
        \bm  \Delta _y (\bm Y)= \mathcal{P}_{\Omega_2} \left( \bm Y - \nabla \gamma_o (\bm Y)    \right) - \bm Y.
    \end{equation}
    where  $ \mathcal{P}_{\Omega_2} (\cdot)$ is the orthogonal projection onto the constraint set $ \Omega_2 = \{  \bm Y \, | \,  \bm Y \succeq 0, \ \mathrm{tr} (\bm Y) = 1   \}$.
    The projection does not have a close form expression, but can be computed by at most $K$ loops; See Appendix~\ref{A-3}. 
    
    \subsubsection{Summary of the iterations for the dual problem}
    \label{subsub:fixed-point and gradient}
    
    With the initial value $\bm Y ^ {(0)}$, which should be strictly feasible, namely $\bm Y  ^ {(0)} \succ 0$ and $ \mathrm{tr} (\bm Y ^ {(0)}) = 1 $, we update $\bm Y$ and $\bm d$ by the following iterations for  $\ell =0,1,\ldots$
    \begin{enumerate}[(a)]
        \item Update $\bm Y$ via $\bm Y ^ {(\ell+1)} := \bm Y ^ {(\ell )} + \alpha ^ {(\ell )} \bm   \Delta _  y  (\bm Y ^ {(\ell )}). $
        \item 
        Update $\bm d$ and $\gamma$ via \eqref{eq-dd}, yielding $\bm d ^ {(\ell+1)} :=  \bm d^ * (\bm Y ^ {(\ell+1)} )$ and $\gamma ^ {(\ell+1)} :=  \gamma_o ( \bm Y ^{(\ell+1)}) $.
    \end{enumerate}
    Here, the step size $\alpha ^ {(\ell )} \in (0,1) $ can be determined by backtracking search \cite{boyd_vandenberghe_2004}.
    The iterations of gradient descend can be stopped if $ \| \bm  \Delta _  y  (\bm Y ^ {(\ell )})  \|_F $ is small enough.
    Let $L$ be the number of iterations at convergence.
    After \eqref{eq-sinr-balancing-dual} is solved, the balanced \gls{SINR} is  given by $ \gamma := \gamma ^ {(L)}$.

    \subsubsection{The solution for the primal problem}
    \label{subsub: sinr balancing: primal}
    
    With the solution of $\gamma_o$ obtained by the above dual problem, we then compute the solution of the primal problem in \eqref{eq-balancing-dpc}. In this regard, we first compute the optimum $\bm F_u$ for \eqref{eq-minimal-power2}. Then, the optimal $\bm F$ in \eqref{eq-balancing-dpc} is given by $\bm F := \bm U \bm \Sigma_r ^ {1/2} \bm F_u$.
    
    The relationship between the dual and primal problems is stated as follow. Given $\gamma =  \gamma_o^*$,  the optimum  of \eqref{eq-sinr-balancing-dual} should also be the solution of the dual problem of \eqref{eq-minimal-power2}. The dual of  \eqref{eq-minimal-power2} is provided in \eqref{eq-minimal-power2-dual} of the Appendix~\ref{A-6}. 
     From the \gls{KKT} conditions, the optimal solution of \eqref{eq-minimal-power2} and \eqref{eq-minimal-power2-dual}  obey
    \begin{equation} \label{eq-ff2}
    	\big(\bm Y+  \sum_{i<k} d_i \bm u_i \bm u_i ^ H \big ) \bm f_k = \frac{1}{\gamma_k} d_k \bm u_k \bm u_k ^ H \bm f_k, \ k =1,\ldots, K.
    \end{equation}   
    
    From \eqref{eq-ff2}, we compute  $\bm F_u$ by $\bm f_k := \sqrt{b_k} \tilde{\bm f}_k$, where 
    \begin{displaymath}
    	\tilde{\bm f}_k :=  \big(\bm Y ^{(L)} + \sum_{i < k} d_i ^{(L)} \bm u_i \bm u_i ^ H \big) ^ {\dagger} \bm u_k,
    \end{displaymath} 
    and $ b_k$ is a real factor, for $k = 1,\ldots,K$.
    The factors $ \{b_k \}$ can be determined by solving the following $K$ linear \gls{SINR} equations:
    \begin{equation}
    	\frac{1}{\gamma^{(L)}} \left|\bm u_k ^ H \tilde{\bm f}_ k\right|^2 b_k - \sum_{i > k}  \left|\bm u_k ^ H \tilde{\bm f}_i\right|^2 b_i = 1 , \ k =1,\ldots,K.
    \end{equation}

    
   

    \section{Sum rate maximization for DPC} \label{sec-5}
    
    In this section, we solve the sum rate maximization in the DPC regime, 
    given by
    \begin{subequations} \label{eq-max-dpc_sumrate}
        \begin{align}
            \max _{\bm F,\bm \gamma} \ & \  \sum_{k=1}^{K} \log(1+\gamma_k), \ 	\mathrm{s.t.}  \ \bm F \bm F^H \preceq \bm R_h \ \mathrm{and}    \\
            &   \   \frac{1}{\gamma_k}  |\bm F_{k,k} | ^ 2 \geq  \sum_{i > k}  | {\bm F}_{k,i} |^2 +  1 , \ k = 1,\ldots,K.  
        \end{align}
    \end{subequations}    
    This problem is from \eqref{eq-max-dpc1} with the target function being the sum rate. 
    Since the \gls{SINR} constraint is non-convex,  sum rate maximization is non-convex and  hard to solve.

    We solve \eqref{eq-max-dpc_sumrate} via the well-known downlink-uplink duality \cite{vishwanath_duality_2003, viswanath_sum_2003, 4203115}, which introduces a dual uplink \gls{MAC} that has the same achievable rate region as the downlink \gls{GBC}. 
    Then the problem becomes  the sum rate maximization for the \gls{MAC}. 
    From the duality, we are able to compute the optimal $\bm F$ via an equivalent convex optimization. 


    \subsection{Optimization reformulation based on downlink-uplink duality}
    
    This section first introduces the signal model of the dual uplink \gls{MAC} with respect to the original downlink model, given in Sec.~\ref{subsub: MAC}. 
    Based on such dual uplink \gls{MAC}, we formulate the sum rate maximization problem in  Sec.~\ref{susub: opt for MAC}.   
    Then in Sec.~\ref{subsub: duality}, we illustrate the  downlink-uplink duality by showing that the uplink \gls{MAC} and the downlink \gls{GBC} have the same achievable rate region. 
    Based on the  duality, Sec.~\ref{subsub: downlink} provides the solutions to the original downlink transmit design problem. 
    
    \subsubsection{Dual uplink \gls{MAC} model}
    \label{subsub: MAC}
    
    Consider a uplink \gls{MAC}, in which $K$  users simultaneously transmit to a base station with $r$  antennas.
    Each user is equipped with a single transmit antenna.
    The channel is $\bm \Sigma_r ^ {1/2} \bm U^H \in \mathbb{C} ^ {r \times K} $. The received signal  is
    \begin{equation}
        \bm y_\mathrm{ul} =  \bm  \Sigma_r ^ {1/2} \bm U^H  \bm x_\mathrm{ul} + \bm v_\mathrm{ul},
    \end{equation}
    where $  \bm x_\mathrm{ul} = [ x_{\mathrm{ul},1},\ldots, x_{\mathrm{ul},K} ]^T $ includes the transmit signal of the users,  and $\bm v_\mathrm{ul}$ is additive Gaussian noise that has uncertain covariance $ \bm Y$ constrained by $\mathrm{tr} (\bm Y  ) = 1$, analogy to \eqref{eq-dual-cond-a}. 
    The transmit power of the $k$-th user, denoted by $ d_k$, for $k  =1,\ldots,K$, should be optimized, under the sum power constraint $ \sum_{k=1} ^ K d_k \leq 1$. 
    
    For the $k$-th user, the receiver applies a linear filter $ \hat{\bm f}_k $, and the output is
    \begin{equation}
        \hat{\bm f}_k ^ H \bm y_\mathrm{ul} =   \sum_{i=1}^{K} \hat{\bm f}_k ^ H \bm u_i x_{\mathrm{ul},i} +  	\hat{\bm f}_k ^ H  \bm v_\mathrm{ul}.
    \end{equation}
    Corresponding to the \gls{DPC} strategy in the downlink regime, we use successive cancellation with the reverse  order $\{K,\ldots,1\}$  \cite{viswanath_sum_2003} so that the signal from the $k+1,\ldots,K$-th user can be subtracted when decoding for the $k$-th user. 
    Therefore, the \gls{SINR} for the $k$-th user is
    \begin{equation}
    \label{equ:sinr mac}
        \mathrm{SINR}_{k}^{\mathrm{mac}} =  \frac{d_k | \hat{\bm f}_k^H \bm u_k |^2}{  \sum_{i<k} d_i | \hat{\bm f}_k^H \bm u_i |^2 + \hat{\bm f}_k^H \bm {Y} \hat{\bm f}_k }.
    \end{equation}
    
    The sum rate is then given by $\sum_{k = 1}^K \log ( 1 +  \mathrm{SINR}_{k}^{\mathrm{mac}} )$. 
    Below, we formulate the sum rate maximization for the \gls{MAC} with respect to the filters $\{ \hat{\bm f}_k \}$, the transmit power $\bm d := [d_1,\dots,d_K]^T$, and the noise covariance $\bm Y$. 
    
    \subsubsection{Sum rate maximization formulation}
    \label{susub: opt for MAC}
    
    The maximum sum rate is defined with regard to the worst case of noise: seeking for the noise variance  $\bm Y$ that most worsens the sum rate. Therefore, we only need to consider non-singular $\bm Y$, i.e., $\bm Y \succ 0 $, because the  sum rate can be infinity when $\bm Y$ is singular. 
    
    When $\bm Y$ is non-singular, the  \gls{MMSE} filter  that maximizes the output \gls{SINR} for the $k$-th user is given by  \cite{4203115}
    \begin{equation}\label{equ:mmse} 
        \hat{\bm f}_k = (\bm Y + \sum_{i<k} d_i \bm u_i \bm u_i ^ H) ^ {-1} \bm u_k, \ k =1,\ldots, K.
    \end{equation}
    Correspondingly,  the achieved \gls{SINR}  for the $k$-th user in \eqref{equ:sinr mac} becomes
    \begin{equation}
    \label{equ:sinr mac mmse}
        \mathrm{SINR}_{k}^{\mathrm{mac}}  = d_k \bm u_k ^H (\bm Y + \sum_{i<k} d_i \bm u_i \bm u_i ^ H) ^ {-1} \bm u_k , \ k =1,\ldots, K,
    \end{equation}
    and further the sum rate is  written as
    \begin{equation} \label{eq-rk}
        \sum_{k = 1}^K \log ( 1 +  \mathrm{SINR}_{k}^{\mathrm{mac}} ) = \log \big| \bm Y+  \sum_{k=1}^{K} d_k \bm u_k \bm u_k ^ H \big | - \log   \big| \bm Y \big|.
    \end{equation}  
    
    Now we  write the  sum rate maximization for the dual \gls{MAC} as
    \begin{subequations} \label{eq-weighted-rate-dual}
        \begin{align}
            \min_{\bm Y \succeq 0} &  \max_ {\bm d \geq 0}   \   \log \big| \bm Y+  \sum_{k=1}^{K} d_k \bm u_k \bm u_k ^ H \big | - \log   \big| \bm Y \big|   \\  &  \mathrm{s.t.}  \  \mathrm{tr} (\bm Y ) = 1, 
            \ \sum_{k=1}^{K} d_k = 1.
        \end{align}
    \end{subequations}
    In \eqref{eq-weighted-rate-dual}, all the constraints are convex.
    The objective function, denoted by $g(\bm Y, \bm d)$, is convex in  $ \bm d$ and is concave in $\bm Y$.
    Therefore, \eqref{eq-weighted-rate-dual} is a convex-concave saddle point problem. 
    Solutions to this problem will be discussed later in Sec.~\ref{subsec: equivalent minimization}.

    \subsubsection{Downlink-uplink duality}
    \label{subsub: duality}
    
    
    
    The downlink-uplink duality is established from the power minimization problem and its dual.
    In \eqref{eq-minimal-power2}, we assign  individual \gls{SINR} thresholds to the users, and the power minimization becomes:
    \begin{equation} \label{eq-minimal-power2-individual}
        \min_{\lambda, \bm F_u} \ \lambda  , \  \ \mathrm{s.t.} \ \bm F_u \bm F_u^H \preceq \lambda \bm I_r,  \ \frac{1}{\gamma_k}  |\bm u_k ^ H \bm f_k | ^ 2 \geq  \sum_{i > k}  | \bm u_k ^ H \bm f_i |^2 +  1 , \ k = 1,\ldots,K,
    \end{equation}
    where $\gamma_k$ is the given \gls{SINR} for the $k$-th user, for $k = 1,\ldots,K$.
    Correspondingly, the Lagrange dual problem becomes:
   \begin{equation} \label{eq-minimal-power2-dual-individual}
       \max_{\bm Y \succeq 0} \ \max_ {\bm d \geq 0}  \ \sum_{k=1}^{K} d_k, \ \ \mathrm{s.t.} \ \eqref{eq-dual-cond-a}, \ \mathrm{and} \ \bm Y+  \sum_{i<k} d_i \bm u_i \bm u_i ^ H \succeq \frac{1}{\gamma_k} d_k \bm u_k \bm u_k ^ H, \ k =1,\ldots,K.	
   \end{equation}
   We  denote the optimal value of \eqref{eq-minimal-power2-individual} and \eqref{eq-minimal-power2-dual-individual} by  $ \lambda^*( \bm \gamma) $ and $\lambda^*_\mathrm{dual}( \bm \gamma)$, respectively, where $\bm \gamma = [ \gamma_1, \ldots, \gamma_K] ^T$.


    Recall that for the \gls{GBC}, the \glspl{SINR} $\gamma_1, \ldots, \gamma_K$ are achievable  if and only if $ \lambda^*( \bm \gamma) \leq 1  $. 
    Meanwhile, we note that the inner maximization in \eqref{eq-minimal-power2-dual-individual} is equivalent to \cite{4203115}
    \begin{equation}
        \min _{\bm d \geq 0, \{ \hat{\bm f}_k \}}  \   \sum_{k=1}^{K} d_k , \ 
        \mathrm{s.t.}  \ \mathrm{SINR}_{k}^{\mathrm{mac}} \geq \gamma_k, \ k =1, \ldots, K,
    \end{equation}
    which finds the minimal  transmit power of the \gls{MAC} to achieve the \glspl{SINR} under a given $\bm Y$.
    Further, the optimal value of the outer maximization in \eqref{eq-minimal-power2-dual-individual} is the worst-case minimal transmit power under all possible $\bm Y$ constrained by \eqref{eq-dual-cond-a}.
    Therefore, $\lambda^*_\mathrm{dual}( \bm \gamma)$ gives the minimal transmit power to achieve the  \glspl{SINR} $\gamma_1, \ldots, \gamma_K$ in the \gls{MAC}. 
    Since the transmit power cannot exceed $1$ in the \gls{MAC}, the \glspl{SINR} $\gamma_1, \ldots, \gamma_K$ are achievable in the \gls{MAC} if and only if $ \lambda^*_\mathrm{dual}( \bm \gamma) \leq 1  $.
    From strong duality, $ \lambda^*_\mathrm{dual}( \bm \gamma)  = \lambda^*( \bm \gamma) $, so the achievable region of the \gls{GBC} and \gls{MAC} are the same.

    \subsubsection{Solutions to the original downlink problem}
    \label{subsub: downlink}    

    We compute $\bm F$ for the original downlink problem after the saddle point $(\bm Y^*, \bm d ^*)$ is obtained.
    First, from the downlink-uplink duality,  the obtained  \glspl{SINR}  in the \gls{MAC}, given by
     \begin{equation} \label{eq-sinr-ul}
    	\gamma_{\mathrm{ul}, k} :=  d_k ^ * \bm u ^ H \big(\bm Y ^ * + \sum_{i<k} d_i ^* \bm u_i \bm u_i ^ H \big) ^ {-1} \bm u_k, \ k =1, \ldots, K,
    \end{equation}
     also give the \glspl{SINR} in  the \gls{GBC}.     
    Then, with the known \glspl{SINR}, $\bm F$ is obtained by solving $\bm F_u$ from  \eqref{eq-minimal-power2-individual}, i.e., $\bm F := \bm U \bm \Sigma_r ^ {1/2} \bm F_u$, analogy to Sec.~\ref{subsub: sinr balancing: primal}. 
    In particular, we compute  $\bm F_u$ by $\bm f_k := \sqrt{b_k} \tilde{\bm f}_k$, where 
    \begin{displaymath}
        \tilde{\bm f}_k :=  \big(\bm Y ^ * + \sum_{i < k} d_i ^ *  \bm u_i \bm u_i ^ H \big) ^ {-1} \bm u_k,
    \end{displaymath} 
    and $ b_k$ is a real factor, for $k = 1,\ldots,K$.
    The factors $ \{b_k \}$ can be determined by solving the following $K$ linear \gls{SINR} equations:
    \begin{equation}
        \frac{1}{\gamma_{\mathrm{ul}, k}} \left|\bm u_k ^ H \tilde{\bm f}_ k\right|^2 b_k - \sum_{i > k}  \left|\bm u_k ^ H \tilde{\bm f}_i\right|^2 b_i = 1 , \ k =1,\ldots,K.
    \end{equation}

    \subsection{Solutions to \eqref{eq-weighted-rate-dual}}
    \label{subsec: equivalent minimization}
    
    To our knowledge, there are three types of methods to solve the convex concave saddle point problem in \eqref{eq-weighted-rate-dual}:
    \begin{enumerate}[(a)]
    	\item The first type is first order algorithms, such as  extra-gradient  and optimistic
    	gradient descent ascent \cite{pmlr-v108-mokhtari20a}, which only require the gradient;
    	\item The second type is interior-point algorithms \cite{boyd_vandenberghe_2004, 4203115}, which solve the \gls{KKT} equations via Newton method and thus require the second derivative;
    	\item The third one, as stated in \cite{doi:10.1080/10556788.2021.1928121}, is to convert the saddle point problem to  equivalent linear conic problems that is acceptable to convex optimization solvers like CVX \cite{gb08,cvx}.
    \end{enumerate}
    
    The implementation of the interior-point algorithms and first order algorithms is omitted here.
    Considering the well-structure of the saddle point problem in \eqref{eq-weighted-rate-dual}, we show that it is equivalent to the following convex optimization:
    \begin{equation} \label{eq-optz}
        \min _{\bm Z \succeq 0}  \  \log | \bm I_r +  {\bm Z} | - \log |\bm Z|  , \
        \mathrm{s.t.} \ \bm u _k ^ H \bm Z \bm u_k \leq 1, \ \forall k.
    \end{equation}
    Here, the optimal $\bm Z$ should be non-singular. 
    The equivalence between  \eqref{eq-weighted-rate-dual} and \eqref{eq-optz} is from the following theorem.

    \begin{mytheorem} \label{theorem-1}
        Let ${\bm Z}^*  $ be the optimum  of  \eqref{eq-optz}. Hence, ${\bm Z}^*  $ should obey
        \begin{equation} \label{eq-Zphi}
            {\bm Z} ^ {*-1} - ( \bm I_r + \bm Z^* )^ {-1} =\sum_{k=1}^{K} \phi_k \bm u_k \bm u_k ^ H, \ \phi_k  \bm u_k ^ H {\bm Z} ^* \bm u_k = \phi_k,
        \end{equation}  
        for $k  =1,\ldots,K$, 	where $\{ \phi_k \geq 0 \}$ are the dual variables.
        Then the saddle point of  \eqref{eq-weighted-rate-dual} can be computed by
        \begin{equation} \label{eq-dY}
            \ \bm d^* = \frac{1}{\eta} \bm \phi, \  {\bm Y}_1 ^ * =  \frac{ 1}{\eta} (  {\bm Z} ^ * +  \bm I_r )^{-1},
        \end{equation}
        where $ \bm \phi = [ \phi_1, \ldots, \phi_K] ^ T $ and $\eta =  \bm 1_K ^ T \bm \phi$.
    \end{mytheorem}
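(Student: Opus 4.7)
The plan is to verify that the pair $(\bm Y_1^*, \bm d^*)$ defined in \eqref{eq-dY} satisfies the feasibility and the first-order (KKT) conditions associated with the convex-concave saddle-point problem \eqref{eq-weighted-rate-dual}. Because the problem has convex feasible sets and the convex-concave structure already noted above, these KKT conditions are both necessary and sufficient for a saddle point, so the verification is all that is required. Throughout, I would use only three ingredients: the stationarity equation $\bm Z^{*-1} - (\bm I_r + \bm Z^*)^{-1} = \sum_k \phi_k \bm u_k \bm u_k^H$ from \eqref{eq-Zphi}, the complementary slackness $\phi_k(\bm u_k^H \bm Z^* \bm u_k - 1) = 0$, and the primal feasibility $\bm u_k^H \bm Z^* \bm u_k \leq 1$ of \eqref{eq-optz}.

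First I would establish the key identity $\bm Y_1^* + \sum_k d_k^* \bm u_k \bm u_k^H = \eta^{-1} \bm Z^{*-1}$. Substituting \eqref{eq-dY} turns the left side into $\eta^{-1}[(\bm Z^* + \bm I_r)^{-1} + \sum_k \phi_k \bm u_k \bm u_k^H]$, and the stationarity relation in \eqref{eq-Zphi} collapses the bracket to $\bm Z^{*-1}$. Inverting gives $(\bm Y_1^* + \sum_k d_k^* \bm u_k \bm u_k^H)^{-1} = \eta \bm Z^*$, while directly from \eqref{eq-dY}, $\bm Y_1^{*-1} = \eta (\bm Z^* + \bm I_r)$. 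These two identities replace every matrix inverse appearing in the subsequent KKT expressions by a closed-form function of $\bm Z^*$, reducing the remaining work to algebra.

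Next I would verify the stationarity conditions. For the inner maximization over $\bm d$ with multiplier $\mu$ associated with $\sum d_k = 1$, the KKT condition $\bm u_k^H (\bm Y_1^* + \sum_i d_i^* \bm u_i \bm u_i^H)^{-1} \bm u_k \leq \mu$, with equality whenever $d_k^* > 0$, reduces by the Step 1 identity to $\eta \bm u_k^H \bm Z^* \bm u_k \leq \mu$; primal feasibility of $\bm Z^*$ in \eqref{eq-optz} supplies the inequality with $\mu = \eta$, and complementary slackness delivers the equality whenever $d_k^* = \phi_k/\eta > 0$. For the outer minimization over $\bm Y$, the stationarity $\bm Y_1^{*-1} - (\bm Y_1^* + \sum_i d_i^* \bm u_i \bm u_i^H)^{-1} = \nu \bm I_r$ collapses via Step 1 to $\eta(\bm Z^* + \bm I_r) - \eta \bm Z^* = \eta \bm I_r$, giving $\nu = \eta$.

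What remains, and what I expect to be the main obstacle, is the feasibility $\mathrm{tr}(\bm Y_1^*) = 1$, which is equivalent to the nontrivial identity $\eta = \mathrm{tr}((\bm Z^* + \bm I_r)^{-1})$. Taking traces of \eqref{eq-Zphi} directly does not produce it. The manipulation I would use is to left-multiply the stationarity equation of \eqref{eq-Zphi} by $\bm Z^*$ before tracing: the right side becomes $\sum_k \phi_k \bm u_k^H \bm Z^* \bm u_k = \sum_k \phi_k = \eta$ by complementary slackness, while the left side simplifies via the elementary rearrangement $\bm Z^*(\bm Z^* + \bm I_r)^{-1} = \bm I_r - (\bm Z^* + \bm I_r)^{-1}$ to $r - (r - \mathrm{tr}((\bm Z^* + \bm I_r)^{-1})) = \mathrm{tr}((\bm Z^* + \bm I_r)^{-1})$. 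Equating the two sides yields the required trace identity and closes the proof; the remaining feasibility conditions $\sum_k d_k^* = 1$, $\bm d^* \geq 0$, and $\bm Y_1^* \succ 0$ follow immediately from the definitions in \eqref{eq-dY}.
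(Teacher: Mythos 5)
Your proposal is correct and follows essentially the same route as the paper's proof in the appendix: both verify the KKT conditions of \eqref{eq-weighted-rate-dual} via the key identity $\bm Y_1^* + \sum_{k} d_k^* \bm u_k \bm u_k^H = \eta^{-1}\bm Z^{*-1}$ obtained from \eqref{eq-Zphi} and \eqref{eq-dY}, and both establish $\mathrm{tr}(\bm Y_1^*)=1$ by multiplying \eqref{eq-Zphi} by $\bm Z^*$, taking the trace, and invoking complementary slackness. No changes are needed.
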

    
    \begin{proof}
        The \gls{KKT} condition of \eqref{eq-optz} directly yields equations in \eqref{eq-Zphi}. 
        To show that $(\bm Y^*, \bm d ^*)$ is a saddle point of \eqref{eq-weighted-rate-dual}, we only need to verify  the  \gls{KKT} conditions; See Appendix~\ref{A-1}.
    \end{proof}
    
    The problem in  \eqref{eq-optz} can be reformulated to a linear conic optimization; See  \cite{kim_optimization_2019}.
    After $\bm Z^*$ is obtained, we can first compute $\{\phi_k\}$ from \eqref{eq-Zphi}, and then compute $(\bm Y^*, \bm d ^ *)$ from \eqref{eq-dY}.
    
    
    
    \subsection{Discussions}
    
    It is worth noting that the maximized sum rate in the  \gls{DPC} regime should equal to the sum rate capacity.
        To see this, we introduce a new variable $ \bm Z' = \bm U \bm \Sigma_r ^ {1/2} \bm Z \bm \Sigma_r ^ {1/2}  \bm U  ^ H$, and then \eqref{eq-optz} becomes
          \begin{equation} \label{eq-sato}
        	\min _{\bm Z' \succeq 0}  \  \log | \bm R_h + \bm Z' | -  \log |  \bm Z' | , \
        	\mathrm{s.t.} \  \bm Z_{k,k}' = 1, \  k = 1,\ldots,K,
        \end{equation}
        when $\bm R_h$ is non-singular.
        In \eqref{eq-sato}, the inequality constraints becomes equality constraints since the equality should hold at the optimum.
        The optimal value of \eqref{eq-sato}, named Sato upper bound \cite{vishwanath_duality_2003,viswanath_sum_2003,1327794}, gives the upper bound for the sum rate capacity of the \gls{GBC}.
        Note that the maximized sum rate via \eqref{eq-max-dpc_sumrate} is equal to the optimal value of \eqref{eq-optz}, and thus equals to the Sato upper bound.
        Therefore, the sum rate capacity is achieved by \gls{DPC}, which corresponds with the conclusion in \cite{weingarten_capacity_2006} that \gls{DPC} achieves the capacity region of  \gls{GBC} with the transmit covariance constraint.
        
%
    

    \section{Numerical results} \label{sec-6}
    
    We performed numerical simulations to demonstrate the performance of multiuser communications under the transmit covariance constraint from radar.
    The simulation settings are introduced in Sec.~\ref{sec-simu1}.
    In Sec.~\ref{sec-simu2}, the simulation results for \gls{SINR} balancing in the transmit beamforming and \gls{DPC} regimes are compared.
    The results of  sum rate maximization is displayed in Sec.~\ref{sec-simu3}.
    The convergence property of the iteration method  proposed in Sec.~\ref{sec-3} and~\ref{sec-4} are displayed in Sec.~\ref{sec-simu4}.

    \subsection{Preliminaries} \label{sec-simu1}
    
    In the simulations, the transmit array is a uniform linear array with equal antenna spacing.
    The antenna spacing is half of the wavelength, and the number of transmit antenna is $M = 10$.
    The optimal covariance for radar $\bm R_o$ is given by $ \bm R_o = P \bm S_o$, where $P$ is the transmit power and $\bm S_o$ is the power normalized covariance.
    For a given $\bm S_o$, we performed numerical experiments with different $P$ to obtain the communication performance versus transmit \gls{SNR} $P / \sigma^2$.
    We also compared the communication performance under three different values of $\bm S_o$, which corresponds to three different radar transmit beam patterns.
    The first value is $\bm S_o = \bm I_M / M$, with which the array transmits orthogonal waveforms and forms an omni-directional beam pattern for radar.
    The second value is $\bm S_o = (1/M) \bm 1_M \bm 1_M ^ T$, which means that the array works in phase-array mode and forms a single  beam towards $ 0 ^ \circ$.
    The third value is obtained via the beam pattern matching design in \cite{stoica_probing_2007} to form multiple beams  towards $ -40 ^ \circ, 0 ^ \circ, 40 ^ \circ$ with a beam width of $10 ^ \circ$.
    The corresponding transmit beam patterns under the three values of $\bm S_o$ are displayed in Fig.~\ref{fig:beams}.
    
    For communications, the channel $\bm H$ obeys Rayleigh fading, namely the elements in $\bm H$ satisfy independent standard complex normal distributions.
    The noise power is $\sigma ^ 2 = 1$.
    To display the communication performance, we run Monte Carlo tests with randomly generated $\bm H$, and computed the average performance of communications.

    \begin{figure}
        \centering
        \includegraphics[width=0.48\linewidth]{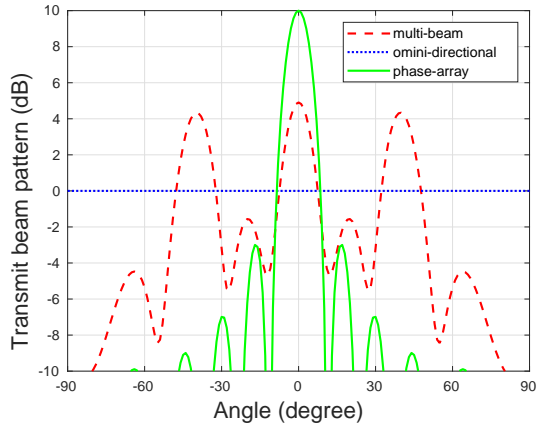}
        \caption{Three different types of transmit beam patterns for radar: omni-directional, multi-beam and phased-array single beam.}
        \label{fig:beams}
    \end{figure}
    
    \subsection{Balanced SINR versus transmit SNR} \label{sec-simu2}
    
     \begin{figure}
        \centering
        \begin{minipage}[t]{0.48\linewidth}
         \centering
        \includegraphics[width=\linewidth]{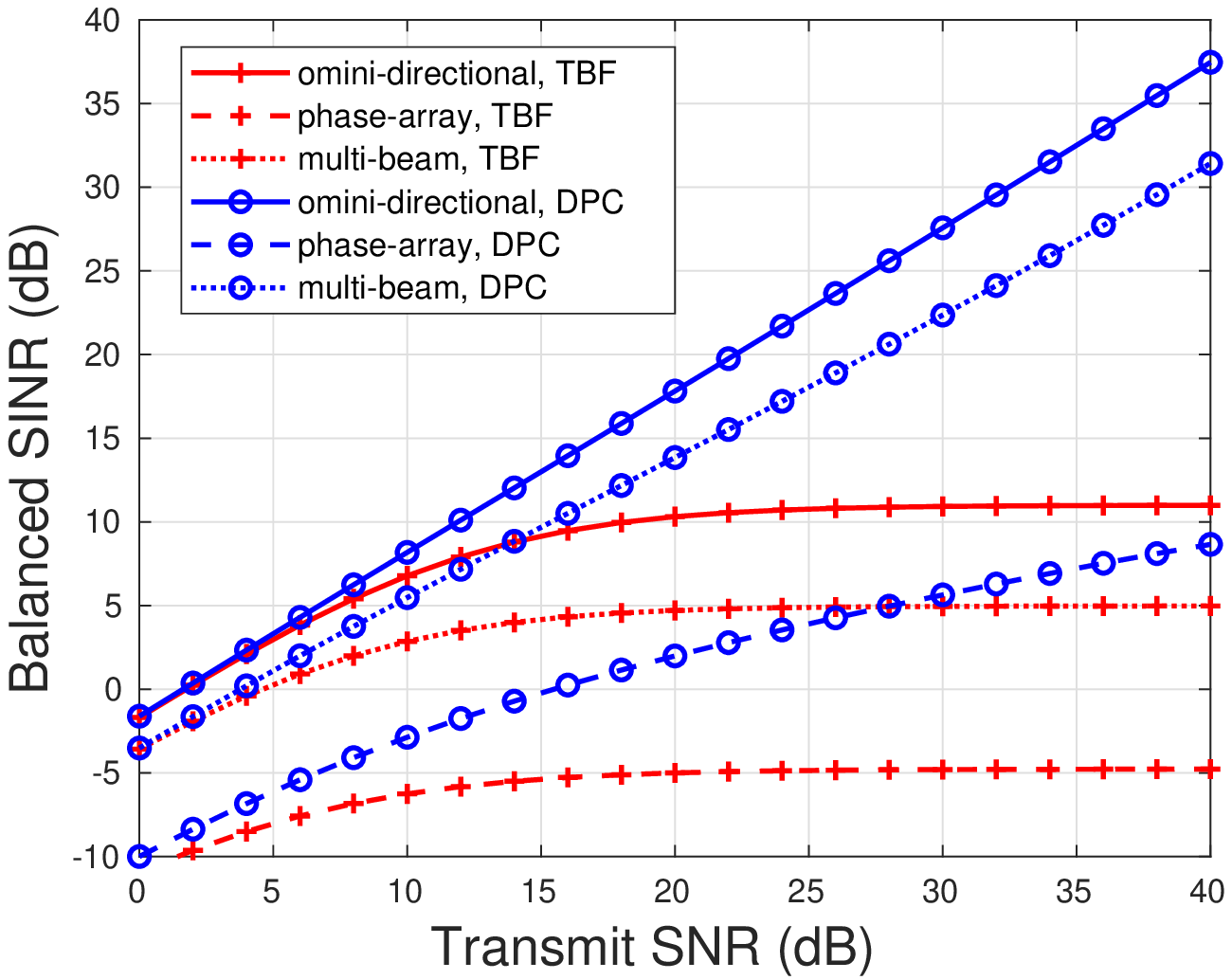}
        \caption{Balanced SINR versus transmit SNR $P / \sigma ^ 2$, for $K=4$. "TBF" is the curve for transmit beamforming.}
        \label{fig:p1}
         \end{minipage}
                  \hfill 
        \begin{minipage}[t]{0.48\linewidth}
         \centering
        \includegraphics[width=\linewidth]{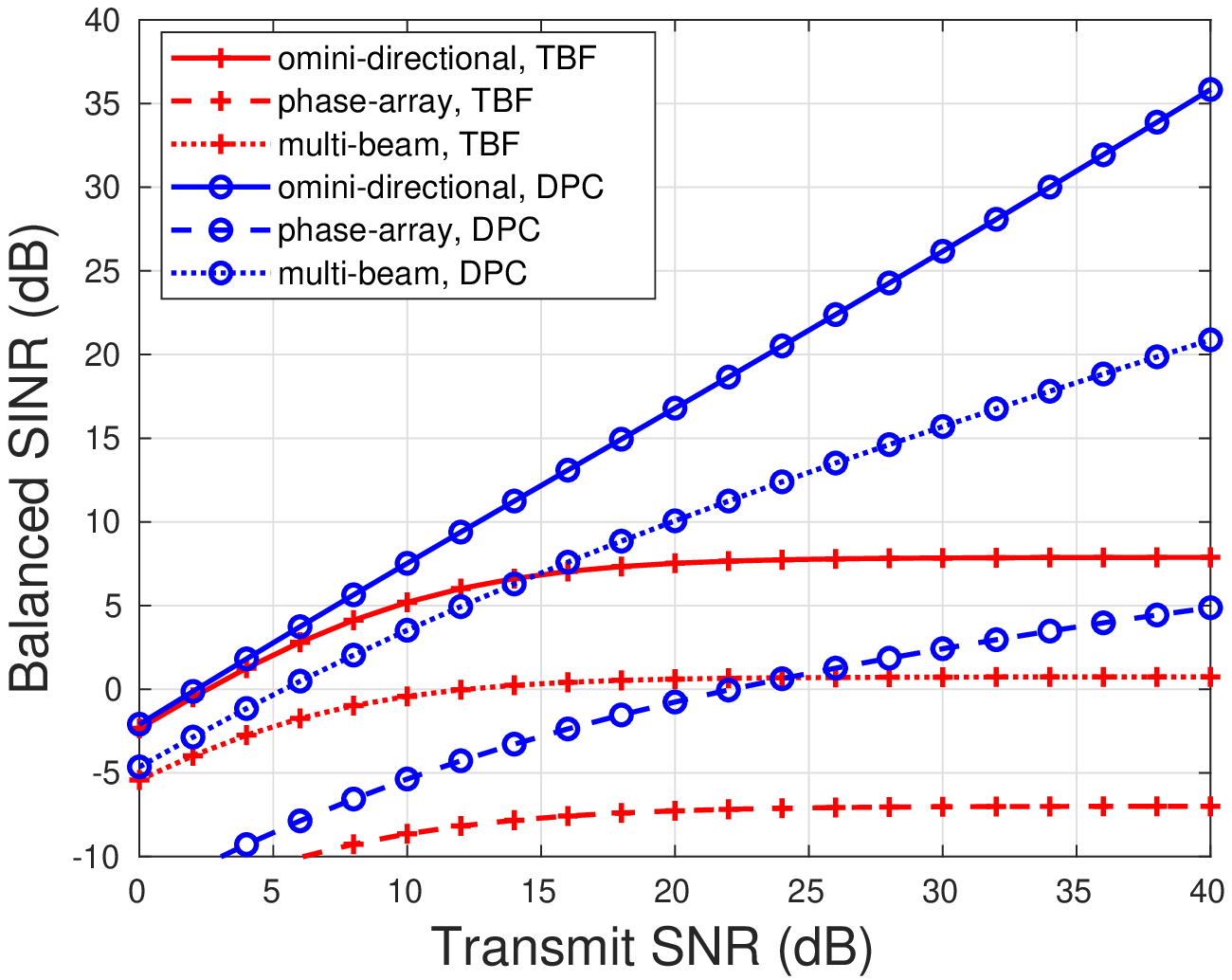}
        \caption{Balanced SINR versus transmit SNR $P / \sigma^2$, for $K=6$. "TBF" is the curve for transmit beamforming.}
        \label{fig:p2}
         \end{minipage}
    \end{figure}   
    
    The balanced \gls{SINR}  under different transmit \gls{SNR} and $\bm S_o$ for $K=4$ is displayed in Fig.~\ref{fig:p1}, in both transmit beamforming and \gls{DPC} regimes.
    From Fig.~\ref{fig:p1}, it is observed that \gls{DPC} achieves higher balanced \gls{SINR} than transmit beamforming.
    The performance improvement of \gls{DPC} over transmit beamforming is especially impressive under a high transmit \gls{SNR}.
    For omni-directional and multi-beam patterns, the balanced \gls{SINR} via  \gls{DPC}  increases linearly with the transmit \gls{SNR}  in dB scale, while the counterpart for transmit beamforming does not increase when the transmit \gls{SNR} is high.
    The reason is that the  interference cannot be effectively canceled via transmit beamforming with the transmit covariance constraint.
    To zero-forcing the interference, transmit beamforming  requires $\bm S_h = \bm H \bm S_o \bm H^H$ to be a diagonal matrix \cite{9124713}, while this condition generally does not hold if  $\bm H$ is Rayleigh fading.
    Since the interference cannot be eliminated, the balanced \gls{SINR} for transmit beamforming keeps constant even if the  \gls{SNR} is high.
    Conversely, \gls{DPC} are still able to cancel the interference under the transmit covariance constraint.
    As explained in Sec. \ref{dpc}, to zero-forcing the interference in the \gls{DPC} regime, we only need $\bm S_h$ to be non-singular.
    This condition can be  met if the rank of $\bm S_o$, denoted by $r_o$, is not less than $K$ when $\bm H$ obeys  Rayleigh fading.
    We note that the value of $r_o$ are $10$, $4$ and $1$ for the omni-directional,  multi-beam and phased-array pattern, respectively.
    When $K = 4$, the condition holds for  omni-directional and multi-beam patterns, and thus the corresponding balanced \gls{SINR} for  \gls{DPC}  is respectable under a high \gls{SNR}.
    For phased-array mode, $ r_o $ is less than $K$, and thus its  balanced \gls{SINR} in the \gls{DPC} regime becomes much lower compared to omni-directional and multi-beam patterns.
    Nevertheless, \gls{DPC} is still able to achieve an acceptable balanced \gls{SINR} for phased-array beam when the transmit \gls{SNR} is high, while we observe that the counterpart via transmit beamforming is even less than $0$-dB.
    
    The balanced \gls{SINR}  versus  transmit \gls{SNR} for $K = 6$ is shown in Fig.~\ref{fig:p2}.
    By comparing Fig.~\ref{fig:p1} and Fig.~\ref{fig:p2}, one  observes that the balanced \gls{SINR} becomes lower when $K$ increases, namely the service quality for  each user can  worsen when the number of users increases.
    We also observe that the loss of balanced \gls{SINR} in the \gls{DPC} regime is slight for omni-directional beam pattern, but is notable for the multi-beam pattern.
    We note that when $K  = 6$,   $ r_o$ is  larger than $K $ for omni-directional beam pattern, and thus \gls{DPC} is still able to zero-forcing the interference. 
    However, for the multi-beam pattern, $r_o$  is  less than $K = 6$, and thus \gls{ZF} \gls{DPC} is not applicable, leading to an obvious performance degradation.
    Based on the above facts, we can regard the $r_o$, the rank of $\bm S_o$, as the degrees of freedom of the communication transmitter.
    In the \gls{DPC} regime, increasing $K$ generally does not cause serious loss of service  quality  if $K$ does not exceed the degrees of freedom, while the loss can be more significant when $K$ exceeds it. 
    It is worth noting that when $K$ exceeds $r_o$,   simultaneously servicing for  $K$ users via transmit beamforming  is almost unrealistic, since the balanced \gls{SINR} is extremely low.

    \subsection{Maximized sum rate versus transmit SNR} \label{sec-simu3}

    \begin{figure}
        \centering
        \includegraphics[width=0.48\linewidth]{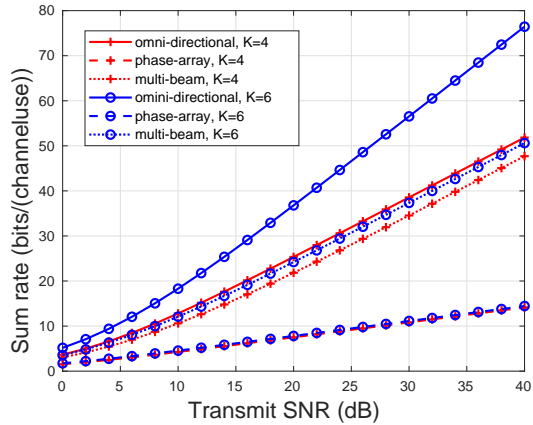}
        \caption{Maximized sum rate versus transmit SNR $P / \sigma^2$, for $K=4,6$. }
        \label{fig:p3}
    \end{figure}

    The maximized sum rate versus transmit \gls{SINR} in the \gls{DPC} regime is given in Fig.~\ref{fig:p3}, for $K = 4,6$.
    In  Fig.~\ref{fig:p3}, the sum rate is asymptotically affine in the transmit \gls{SNR} in dB, and the slope of the line determines the multiplexing  gain \cite{978730,lee_dirty_2006} of multiuser  communications, which equals to the rate gain in bits/channeluse for  every $3$-dB transmit power  gain. 
    With a power constraint, it is proven in \cite{lee_dirty_2006} that the multiplexing  gain of the \gls{GBC} is $K$. 
    With the considered transmit covariance constraint, the results is different.
    For instance, we read from Fig.~\ref{fig:p3} that the multiplexing  gain for the multi-beam pattern does not increase when $K$ increases from $4$ to $6$.
    A similar result is observed from the curve for phase-array mode.
    Nevertheless, for omni-directional beam pattern, the multiplexing  gain increases from $4$ to $6$ when $K$ increases from $4$ to $6$.
    In summary, one can find  that the multiplexing  gain is $\min\{K, r_o\}$ with the transmit covariance constraint.
    The explanation is from the fact that \gls{ZF} \gls{DPC} is asymptotic optimal under a high \gls{SNR}.
    When $K \leq r_o$, the \gls{GBC} can be simplified to $K$ \gls{AWGN} channels to the $K$ users via \gls{ZF} \gls{DPC}, and thus the multiplexing  gain is $K$.  
    However, when $K >  r_o$, to meet the constraint $ \bm F \bm F^H \preceq \bm R_h $,  $\bm F$ should have at most $r_o$ non-zero diagonal elements if it is lower triangular.
    In other words,  to  zero forcing the interference, only $r_o$ users are active while the others are inactive. 
    Therefore, when $K >  r_o$, the multiplexing  gain is restricted by  $r_o$, and the sum rate gain is not obvious if $K$ exceeds $   r_o$.

    \subsection{Convergence performance of the iteration algorithms} \label{sec-simu4}
    
    The convergence performance of the iteration algorithms to solve the \gls{SINR} balancing problem for transmit beamforming in Sec.~\ref{sec-bf-2} is displayed in Fig.~\ref{fig:c1}.
    In Fig.~\ref{fig:c1}, the \gls{SINR} gap $ \gamma ^ {(\ell)} - \gamma ^ * $ versus iteration time $\ell$ in $10$ experiments with randomly distributed $\bm H$ is demonstrated, for $K = 4$, $\bm S_o = (1/M) \bm I_M$ and $P  = 10$.
    Here, $ \gamma ^ *$ is the balanced \gls{SINR} obtained by the linear conic programming in \eqref{eq-sinr-p2}, and   $\gamma ^ {(\ell)}$ is the temporary \gls{SINR}  after the  $\ell$-th iteration, given by
    \begin{displaymath}
        t ^ {(\ell)} := \mathrm{{tr}} \Big\{ (\bm D  ^ {(\ell)} \bm [ D ^ {(\ell)}  ] ^ H    )^{1/2}   \Big\} , \  \gamma ^ {(\ell)} := [ t ^ {(\ell)}]^2 / ( 1 - [ t ^ {(\ell)}]^2) .
    \end{displaymath}
    
    From Fig.~\ref{fig:c1}, we observe that the iteration algorithm in Sec.~\ref{sec-bf-2} converges fast.
    In some experiments, the \gls{SINR} gap is less than $10^{-4}$ after no more than $10$ iterations.
    We note that the optimal $\bm d$ in \eqref{eq-dual-problem-d} may locate at the boundary, i.e. have zero elements.
    In this case, the algorithm may need more iterations to converge, as indicated by the curves in the right of Fig.~\ref{fig:c1}.
    Nevertheless, the algorithm can still find an acceptable approximate solution with a few iterations.
    Considering the interference control,  the number of users for transmit beamforming should be limited. 
    Therefore, the dimension of the variable $\bm d$ can be low in practice, and it is hopeful to implement the algorithm in real time.
    
    The convergence performance of the iterative algorithms to solve the \gls{SINR} balancing problem for \gls{DPC} in Sec.~\ref{sec-dpc-b} is displayed in Fig.~\ref{fig:c2}, which gives the \gls{SINR} gap $ \gamma ^ {(\ell)} - \gamma _o  ^ * $ versus iteration time $\ell$ in $10$ experiments.
    In each experiment, $\bm H$ is randomly generated with $K  =4$, and $\bm S_o = (1/M) \bm I_M$.
    To perform the experiments, we first let the balanced \gls{SINR} be $\gamma _o  ^ *  = 10$, next compute the minimal power $P'$ to achieve the \gls{SINR} by solving \eqref{eq-minimal-power2}, and then perform the iteration algorithm to solve \eqref{eq-sinr-balancing-dual} with the power $P'$, i.e. with $\bm R_o = P' \bm S_o$. 
    
    Compared with the the iteration algorithm for transmit beamforming, the algorithm for \gls{DPC} needs more iterations to achieve a small \gls{SINR} gap.
    This is mainly because the optimization in \eqref{eq-sinr-balancing-dual} has a  more complex structure and  a higher dimension than that in   \eqref{eq-dual-problem-d}.
    In the experiments, we observe that the iterations converge within a moderate number of times when the optimal $\bm Y$ is non-singular, while the \gls{SINR} gap decreases slower when the optimal $\bm Y$ is singular.
    For practical applications, a trade-off between the accuracy and computation time can be considered. 
    In other words, we can control the number of iterations and obtain an approximate solution.
    To improve the algorithm efficiency, one may further consider  deep leaning enabled acceleration schemes as in \cite{9027103}.

    \begin{figure}
        \centering
        \begin{minipage}[t]{0.48\linewidth}
         \centering
        \includegraphics[scale=0.48]{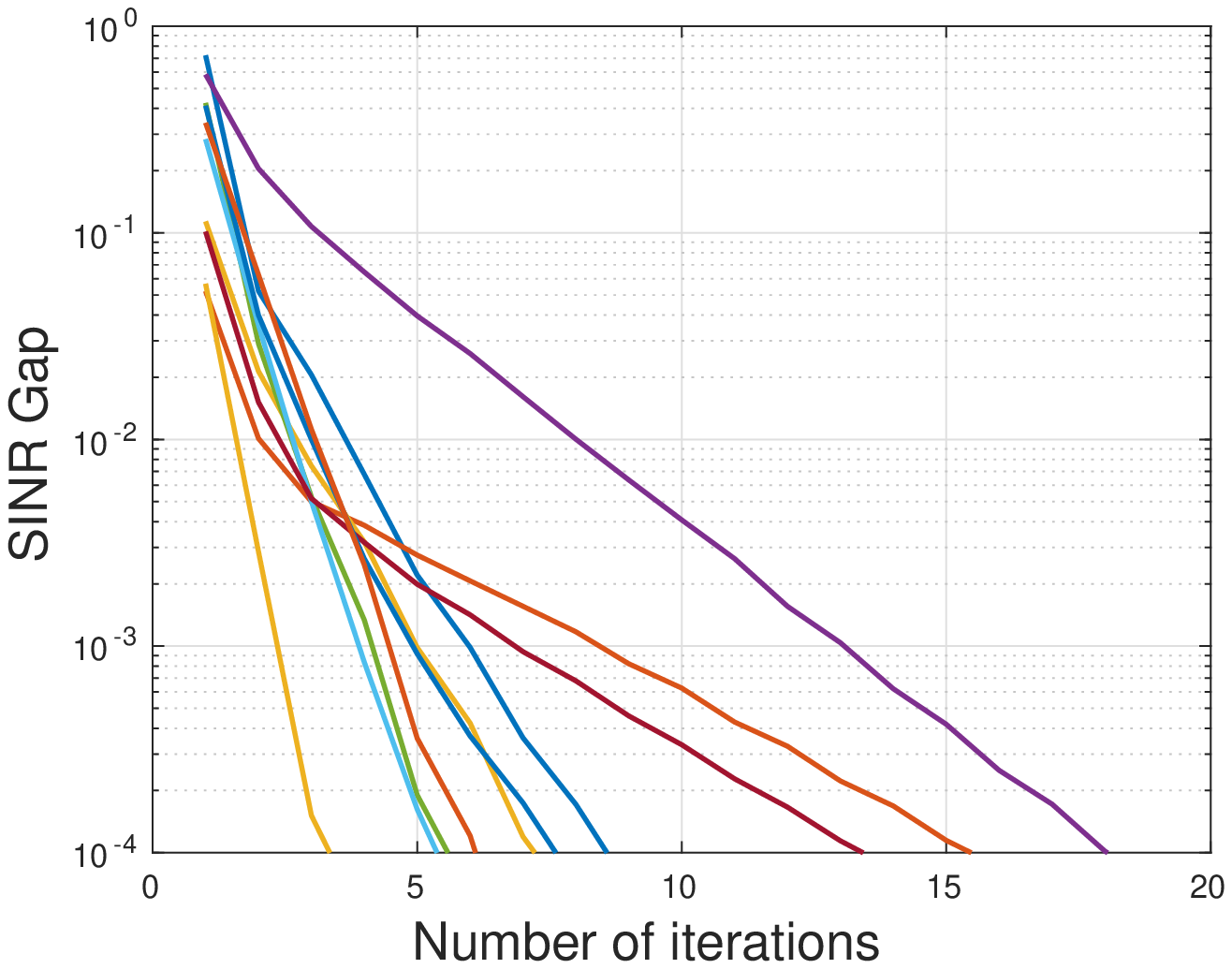}
        \caption{The \gls{SINR} gap $ \gamma ^ {(\ell)} - \gamma ^ * $ versus iteration time $\ell$ for  the \gls{SINR} balancing in the transmit beamforming regime.}
        \label{fig:c1}
        \end{minipage}
        \hfill
         \begin{minipage}[t]{0.48\linewidth}
        \centering
        \includegraphics[scale=0.48]{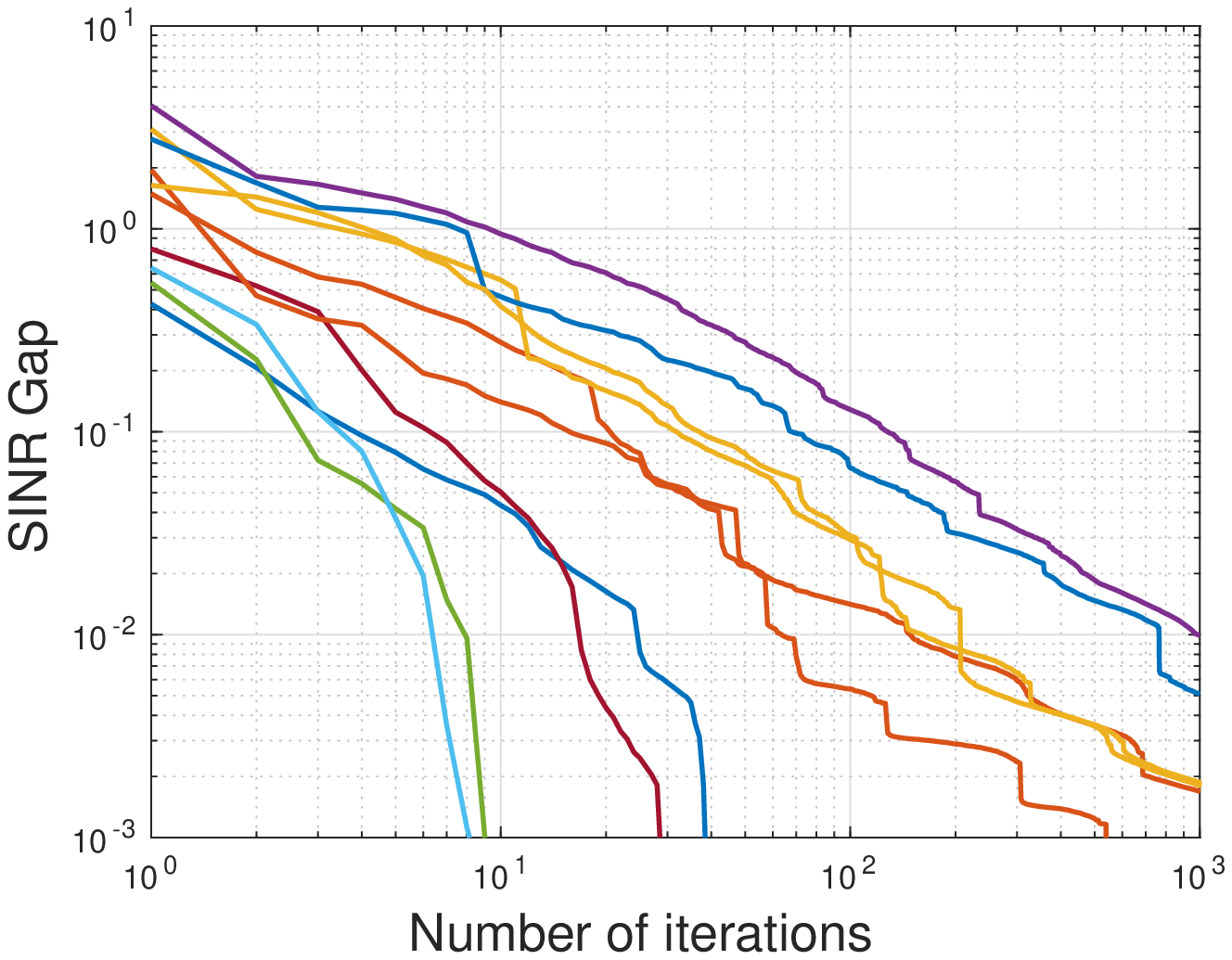}
        \caption{The \gls{SINR} gap $ \gamma ^ {(\ell)} - \gamma_o ^ * $ versus iteration time $\ell$ for  the \gls{SINR} balancing in the \gls{DPC} regime.}
        \label{fig:c2}
        \end{minipage}
    \end{figure}

    \section{Conclusion}  \label{sec-7}
    
    In this paper, we consider the transmit design of a joint \gls{MIMO} radar and downlink multiuser communications system, in which the communication performance is optimized under a transmit covariance constraint from radar.
    In particular, we formulate the \gls{SINR} balancing problem in both transmit beamforming and \gls{DPC} regimes, and the sum rate maximization in the \gls{DPC} regime.
    Further, we proposed methods to solve these problems via convex optimization.
    Despite the low complexity of transmit beamforming, the achievable \gls{SINR} via transmit beamforming may be low even if the transmit \gls{SNR} is high.
    As the theoretically optimal scheme for multiuser precoding, \gls{DPC} has a impressive performance gain over transmit beamforming, with increased complexity for encoding and optimization.
    In the simulations, it is observed that the degrees of freedom for the communication transmitter is restricted by the rank of the transmit covariance.
    
    \begin{appendices}

        \section{The Lagrange dual  of \eqref{eq-sinr-p3}} \label{A-5}
        The dual  objective function is defined by  $	g_1(\bm Y, \bm d) = \max _ {\bm F_u,t} \mathcal{L}_1 (\bm F_u,t, \bm Y, \bm d )$. 
        Under the conditions that
        \begin{equation} \label{eq-dual-condition1}
            \bm s^ T \bm d = 1, \  d_k \bm u_k \in \mathcal{C} (\bm Y) \ \mathrm{for} \ k =1,\ldots,K, 	
        \end{equation}
        where $\bm s = [s_1, \ldots, s_K]^T$, $	g_1(\bm Y, \bm d)$ is finite and its expression is
        \begin{equation}
            g_1(\bm Y, \bm d) = \mathrm{tr} \Big\{  \bm Y + \frac{1}{4} \bm Y ^ \dagger \bm D \bm D ^ H \Big \},
        \end{equation}
        where $\bm D = \left[ d_1 \bm u_1, \ldots, d_K \bm u_K \right]$.
        Thus, the  Lagrange dual problem of \eqref{eq-sinr-p3} is
        \begin{equation} \label{eq-dual-problem}
            \min _ {\bm Y, \bm d } \mathrm{tr} \Big\{ \bm Y + \frac{1}{4} \bm Y ^ \dagger \bm D \bm D ^ H \Big \} , \ \mathrm{s.t.}  \ \eqref{eq-dual-condition1}, \ \bm d \geq 0, \ \bm Y \succeq 0. 
        \end{equation}
        
        To solve the dual problem, we  consider to optimize $\bm Y$ with a given $\bm d$.
        From equation (2.4) in \cite{shebrawi_albadawi_2013}, one has 
        \begin{equation} \label{eq-trYRd}
            \mathrm{tr} 	\Big\{  \bm Y + \frac{1}{4} \bm Y ^ \dagger \bm D \bm D ^ H  \Big \}    \geq   \mathrm{tr} 	\Big\{	(   \bm Y  \bm Y ^ \dagger \bm D \bm D ^ H  )^{1/2} 	\Big\} =  \mathrm{tr}  	\Big\{	(   \bm D \bm D ^ H  )^{1/2}  	\Big\} ,
        \end{equation}
        where the inequality holds with equality when $\bm Y = \frac{1}{2} (   \bm D \bm D ^ H  )^{1/2} $.
        In \eqref{eq-trYRd}, we use the equality that $ \bm Y\bm Y ^ \dagger \bm D =  \bm D  $ since the columns of $ \bm D $ should be in $\mathcal{C}(\bm Y)$ according to \eqref{eq-dual-condition1}.
        Note that \eqref{eq-trYRd}  gives the optimal value  of \eqref{eq-dual-problem} under a given $\bm d$.
        Therefore, the dual problem in \eqref{eq-dual-problem}  is equivalent to \eqref{eq-dual-problem-d}.

        \section{Computation of the projection onto $\Omega_1$ } \label{A-2} 		
        The projection of a point $\bm d  = [d_1,\ldots,d_K] ^ T\in \mathbb{R} ^ K $ onto $\Omega_1$ is expressed as
        \begin{equation} \label{eq-optx}
            \bm x = \arg \min_{\hat{\bm x}} \frac{1}{2} \|  \bm d - \hat{\bm x}  \|_2 ^ 2, \ \mathrm{s.t.}  \ \hat{\bm x} \geq 0 , \ \bm s ^ T \hat{\bm x} = 1,
        \end{equation}
        where $\bm s  = [s_1, \ldots, s_K] ^ T> 0$.
        Letting $ \bm \theta = [ \theta_1, \ldots, \theta_K] ^ T \geq 0 $ and $v$ be the dual variables associated with the constrains $ \hat{ \bm x } \geq  0$ and $  \bm s ^ T \hat {\bm x} = 1 $, ${\bm x} $ should be the solution of the \gls{KKT} system \cite{boyd_vandenberghe_2004}:
        \begin{equation} \label{eq-kkt1}
            x_k = d_k + \theta_k - v s_k = 0,  \bm s^T \bm x = 1, \theta_k x_k = 0, x_k \geq 0, \theta_k \geq 0 ,
        \end{equation}
        where $x_k$ is the $k$-th element in $\bm x$, for $k = 1,\ldots,K$.
        
        To solve \eqref{eq-kkt1},	we let $r_k =  d_k / s_k$.
        Without loss of generality, it is assumed that $r_1 \leq \cdots \leq r_K$. 
        First, we note that $v$ should obey $v < r_K$.
        If $v \geq r_K$, we have $ x_k^ 2 = x_k \theta_k + x_k (d_k - v s_k) \leq 0 $
        for all $k$. Then $\bm x = 0$, which is not feasible.
        There should exist an integer $1 \leq j \leq K $ such that  $ v \in [ r_{j-1} ,r_j )$.
        If $j = 1$, the interval is $(- \infty, r_1)$.
        Similarly, we have $x_k = 0$ for all $ k < j$.
        In addition, for $k \geq  j$, we have $x_k - \theta_k =  d_k - v s_k   > 0$, so $\theta_k = 0$. 
        Based on these conclusions, we have
        \begin{equation} \label{eq-x}
            x_k = 0, \ \forall k < j, \  x_k = d_k - v s_k, \ \forall k \geq j.
        \end{equation}
        Since $\bm s^T \bm x = 1$, $v$ should be given by
        \begin{equation} \label{eq-v}
            v = \frac{\sum_{k=j}^{K} d_k - 1}{ \sum_{k=j}^{K} s_k  }.
        \end{equation}

        The problem is that $j$ is unknown. Nevertheless, we can search for a $j$ so that $  v(j) \in  [ r_{j-1} ,r_j ) $, where $v(j)$ is the value of $v$ computed by \eqref{eq-v}.
        Since $j$ is an integer in $[1,K]$, the times of searching is not larger than $K$.
        Once $j$ is obtained, the projection $\bm x$ can be computed via \eqref{eq-x} and \eqref{eq-v}.
        
        \section{The dual of \eqref{eq-balancing-dpc}}  \label{A-6} 
        
         The Lagrange function \cite{boyd_vandenberghe_2004} of \eqref{eq-minimal-power2} is
        \begin{equation}
        \mathcal{L}_2(  \bm F_u, \lambda, \bm Y, \bm d ) = \lambda + \mathrm{tr}\big(\bm Y   ( \bm F_u \bm F_u ^ H - \lambda  \bm I_r ) \big) + \sum_{k=1}^{K} d_k \Big\{ \sum_{i > k} \left|\bm u_k ^ H \bm f _ i\right|^2 + 1  -\frac{1}{\gamma} \left|\bm u_k ^ H \bm f _ k\right|^2 \Big\},
        \end{equation}
        where $\bm d  = \left[ d_1,\ldots,d_K \right]^T \geq 0$ and $\bm Y  \succeq 0 $ are dual variables.
        The dual function is  $	g_2(\bm Y, \bm d) = \min _ {\bm F_u,\lambda} \mathcal{L}_2 (\bm F_u,\lambda, \bm Y, \bm d )$. 
        It can be shown that $g_2(\bm Y, \bm d)$ is finite under the  conditions of \eqref{eq-dual-cond-a} and \eqref{eq-dual-cond-b}  \cite{4203115}, repeated as follows:
        \begin{subequations} \label{eq-dual-cond}
        	\begin{align}                       
        		\mathrm{tr} (\bm Y  ) & = 1, \tag{\ref{eq-dual-cond-a}} \\
        		\bm Y+  \sum_{i<k} d_i \bm u_i \bm u_i ^ H &\succeq \frac{1}{\gamma} d_k \bm u_k \bm u_k ^ H, \ k =1,\ldots,K. \tag{\ref{eq-dual-cond-b}}
        	\end{align}
        \end{subequations}
        When these conditions hold, the dual function is given by
        $ 	g_2(\bm Y, \bm d) =  \sum_{k=1}^{K} d_k$. 
        Therefore, the dual  problem of \eqref{eq-minimal-power2} is 
        \begin{equation} \label{eq-minimal-power2-dual}
        	\max_{\bm Y \succeq 0} \ \max_ {\bm d \geq 0}  \   \sum_{k=1}^{K} d_k , \ \
        	\mathrm{s.t.} \ \eqref{eq-dual-cond-a} \ \mathrm{and}  \ \eqref{eq-dual-cond-b}.
        \end{equation}

        It can be proven that  \eqref{eq-minimal-power2}   has strong duality \cite{4203115}, so the optimal value of \eqref{eq-minimal-power2-dual}, denoted by  $ \lambda_\mathrm{dual}^*(  \gamma)$,  equals to $\lambda^*(  \gamma)$. 
        Then, $\gamma_o^*$  should satisfy $\lambda_\mathrm{dual}^*(\gamma_o^* ) = 1$.
        In other words, $\gamma_o^*$ is  the minimal $\gamma \geq 0$ so that $ \lambda_\mathrm{dual}^*(\gamma ) \geq 1$.
        Note that $ \lambda_\mathrm{dual}^*(\gamma ) \geq 1$ means there exists a pair of feasible solution $\bm Y, \bm d$ in \eqref{eq-minimal-power2-dual} satisfying $ \sum_{k=1}^{K} d_k \geq 1$.  
        Therefore, $ \gamma_o^*$ is equal to the optimal value of the dual problem in \eqref{eq-sinr-balancing-dual}.
        
        \section{The gradient of the objective function in \eqref{eq-sinr-balancing-dual-2}}  \label{A-4} 
        
        To derive the gradient, we note that $\gamma_o (\bm Y),  \bm  d ^ * (\bm Y)  $ is the solution of the equations  in \eqref{eq-dd}.
        When $ \bm Y \succ 0$, the index set $ \mathcal{I}(\bm Y)$ is empty.
        Computing  the differential to the equations in \eqref{eq-dd}, we have
        \begin{subequations}   \label{eq-diff}
            \begin{align}	
                &	\bm 1_K ^ T \mathrm{d} \bm d = 0,  \label{eq-diff-a} \\
               & \mathrm{d}  \gamma - \mathrm{d}  d_k \bm u_k ^ H \bm f_k ^ * (\bm Y)   = - d_k  ^ * (\bm Y)  [ \bm f_k ^ * (\bm Y)] ^ H (  \mathrm{d}  \bm Y + \sum_{i < k}  \mathrm{d}  d_i \bm u_k \bm u_k ^ H  )  \bm   f_k ^ * (\bm Y)  ,  \label{eq-diff-b}
            \end{align}
        \end{subequations}
        for $k = 1,\ldots,K$. 
        We rewrite \eqref{eq-diff-b} into a matrix form:
        \begin{equation} \label{eq-diff-2}
            \bm A  \mathrm{d} \bm d =  \mathrm{d} \gamma \bm 1_K + \bm b,
        \end{equation}
        where the $k$-th element in  $ \bm b $ is  $ d_k ^ * (\bm Y) [ \bm f_k ^ * (\bm Y)] ^ H  \mathrm{d}  \bm Y \bm f_k ^ * (\bm Y)   $ and $\bm A$ is defined in \eqref{eq-A}.
        Combining \eqref{eq-diff-a} and \eqref{eq-diff-2}, one has  
        \begin{equation}
            \bm 1 _K ^ T \bm A ^ {-1} \bm 1_K \mathrm{d} \gamma + \bm 1_K ^ T \bm A ^ {-1} \bm b = 0.
        \end{equation}
        Letting $a_k$ be the $k$-th element in $\bm a = (\bm A ^ {-1}) ^ T \bm 1_K$, there is
        \begin{equation}
            \mathrm{d} \gamma =  - \frac{1}{\sum_{k=1}^{K}  a_k } \mathrm{tr} \Big\{  \sum_{k=1}^{K}  a_k d_k^*  (\bm Y) \hat{\bm f}_k^*(\bm Y) 
            [ \hat{\bm f}_k^*(\bm Y)] ^ H  \mathrm{d} \bm Y \Big\},
        \end{equation}
        from which one can obtain the gradient in \eqref{eq-gradY}.

        \section{Computation of the projection onto $\Omega_2$ } \label{A-3} 		
        The projection of a Hermitian matrix $\bm Y \in \mathbb{C} ^ {r \times r}  $ onto $\Omega_2$ is 
        \begin{equation} \label{eq-X}
            \arg \min_{{\bm X}} \frac{1}{2} \|  \bm Y - {\bm X}  \|_F ^ 2, \ \mathrm{s.t.}  \ \mathrm{tr} ({\bm X} ) = 1, \ { \bm  X} \succeq 0.
        \end{equation}
        To solve \eqref{eq-X}, we write the eigen decomposition of  $\bm Y$ as $ \bm Y  = \bm V \bm \Sigma_y \bm Y^H $, where $\bm V$ is a $r \times r$ unitary matrix and $\bm \Sigma_y$ is diagonal.
        Letting $\bm \Sigma_x = \bm V^ H {\bm X} \bm V $,  the optimization in \eqref{eq-X} is reformulated to
        \begin{equation} \label{eq-X1} 
            \min_{\bm \Sigma_x} \frac{1}{2} \|  \bm \Sigma_y - \bm \Sigma  _x \|_F ^ 2, \ \mathrm{s.t.}  \ \mathrm{tr} ({\bm \Sigma}_x ) = 1, \ {\bm \Sigma}_x \succeq 0.
        \end{equation}
        It can be observed that $ \bm \Sigma_x  $ should be diagonal at the optimum.
        We  let $y_k$ and $x_k$ be the $(k,k)$-th elements in $\bm \Sigma_y$ and $\bm \Sigma_x$, respectively. 
        Then \eqref{eq-X1} is equivalent to
        \begin{equation} \label{eq-x2}
            \min_{\bm x } \frac{1}{2} \|  \bm y - \bm x  \|_2  ^ 2, \ \mathrm{s.t.}  \ \bm 1_r ^ T \bm x = 1, \ \bm x \geq  0.
        \end{equation}
        where $\bm x = [x_1, \ldots, x_r] ^ T$ and $\bm y = [y_1, \ldots, y_r] ^ T$.
        Here,  \eqref{eq-x2} has the same form as the optimization in \eqref{eq-optx}, and can be solved with no more than $K$ loops.
        Once the optimal $\bm x$ is obtained, the projection is given by $ \bm V \mathrm{diag} (\bm x) \bm V^H $.

        \section{Proof for Theorem~\ref{theorem-1}} \label{A-1}
        In this proof, we verify that $\bm Y^*$ and $\bm d ^ *$ meet the \gls{KKT} condition of   \eqref{eq-weighted-rate-dual}, which is stated as
        \begin{subequations} \label{eq-kkt-yd}
            \begin{align}
                & 	( {\bm Y} ^*+  \sum_{k=1}^{K} d_k^* \bm u_ k \bm u_k ^ H ) ^{-1} -   [ \bm Y]^{*-1} + \nu \bm I_r  = \bm 0, \label{eq-kkt-yd-1} \\
                &  \mathrm{tr} ( {\bm Y ^ *})  = 1,\  {\bm Y}^* \succ 0, \label{eq-kkt-yd-3}  \\
                &  \bm u_j ^H ( {\bm Y}^*+  \sum_{k=1}^{K} d_k^* \bm u_k \bm u_k ^ H   ) ^{-1} \bm u _ j  - \eta+ \varphi_j = 0,  \label{eq-kkt-yd-2}  \\
                &  \sum_{k=1}^{K} d_k^* =1 ,\   d^*_j \geq 0,\ \varphi_j \geq 0, \ \varphi_j d_j^* = 0,  \label{eq-kkt-yd-4}
            \end{align}
        \end{subequations} 
        for $j = 1,\ldots,K$.
        Here, $\nu$, $\eta$ and $\{\varphi_k\}$ are the dual variables associated with the constraints $\mathrm{tr} (\bm Y) = 1$, $\sum_{k=1}^{K} d_k = 1$ and $\bm d \geq 0$, respectively, and their values are given by
        \begin{equation}
            \eta=  \bm 1_K ^ T \bm \phi,\ \nu =  \eta,\ \varphi_k =\eta\Big(1 - \bm u _ k ^ H  {\bm Z}^* \bm u _ k\Big),
        \end{equation}
        for $k = 1,\ldots,K$.
        
        In the following, we check the conditions in \eqref{eq-kkt-yd} one by one.
        First, we point out an important relationship between $\bm d^*$, $ {\bm Y}^*$ and $ {\bm Z}^*$.
        From  \eqref{eq-Zphi}, one has
        \begin{equation}
            (1 / \eta) [ {\bm Z} ^ {*}]^{-1} =  \bm Y ^ * + \sum_{k=1}^{K}  d_k^* \bm u_k  \bm u_k ^ H.
        \end{equation}
        Then 
        \begin{displaymath} 
            \begin{aligned}
                &  ( {\bm Y}^*+  \sum_{k=1}^{K} d_k^* \bm u_k \bm u_k ^ H ) ^{-1} -  [ \bm Y^{*}]^{-1} =  \eta  {\bm Z}^* -  [ \bm Y^{*}]^{-1}  = - \nu \bm I_r,
            \end{aligned}       
        \end{displaymath}
        i.e. \eqref{eq-kkt-yd-1} holds, and 
        \begin{displaymath}
            \bm u_j^H ( {\bm Y}^*+  \sum_{ki=1}^{K} d_k^* \bm u_k \bm u_k ^ H   ) ^{-1} \bm u _ j = \eta \bm u_j ^ H \bm Z^* \bm u _j =  \eta - \varphi_k,
        \end{displaymath}
        i.e. \eqref{eq-kkt-yd-2}  holds.
        In \eqref{eq-kkt-yd-3}, $ {\bm Y}^* \succ 0$ is trivial, and 
        \begin{equation} \label{eq-trys}
            \mathrm{tr} ( {\bm Y}^*) = \frac{1}{\eta}	\mathrm{tr} ( (\bm Z^* + \bm I_r) ^ {-1} )  =\frac{r}{\eta} - \frac{1}{\eta}  \mathrm{tr} (  (\bm Z^* + \bm I_r) ^ {-1}  \bm Z^* ). 
        \end{equation}
        Multiply the  left and right side of \eqref{eq-Zphi} by $ {\bm Z} ^ *$ and take the matrix trace, we have
        \begin{equation} \label{eq-trace-eq}
            \mathrm{tr} (  (\bm Z^* + \bm I_r) ^ {-1}  \bm Z^* ) = r - \sum_{k=1}^{K} \phi_k \bm u_k^H \bm Z^* \bm u_k.
        \end{equation}
        Substituting \eqref{eq-trace-eq} into \eqref{eq-trys}, one has
        \begin{equation}
            \mathrm{tr} ( \bm Y ^* ) = \frac{1}{\eta} \sum_{k=1}^{K} \phi_k  \bm u_k^H \bm Z^* \bm u_k = \frac{1}{\eta} \sum_{k=1}^{K} \phi_k   = 1, 
        \end{equation}
        so \eqref{eq-kkt-yd-3} holds. 
        Finally we show that \eqref{eq-kkt-yd-4} holds.
        The first three conditions in \eqref{eq-kkt-yd-4} are trivial.
        According to  \eqref{eq-Zphi}, it can be shown that
        \begin{eqnarray}
            \varphi_j d_j ^ * =  \phi_j  (1 -  \bm u _ j^ H  {\bm Z}^* \bm u_j ) = 0.
        \end{eqnarray} 
        Therefore \eqref{eq-kkt-yd-4} holds and the proof is completed.
        
    \end{appendices} 
    
    \bibliographystyle{IEEEtran}
    \bibliography{ref}

\end{document}